\documentclass[twocolumn]{autart}    % Enable this line and disable the 
\usepackage{graphicx}          % Include this line if your 
\usepackage{amsmath} % assumes amsmath package installed
\usepackage{amssymb}  % assumes amsmath package installed
\usepackage{natbib}
\setcitestyle{authoryear,open={(},close={)}}
\usepackage{color}
\usepackage{mathtools}
\usepackage{url}
\usepackage{booktabs}

\newcommand{\longthmtitle}[1]{\mbox{}\textup{\textbf{(#1):}}}

\newcommand{\R}{\mathbb{R}}

\newcommand{\Rpluseq}{\mathbb{R}_{\geq 0}}

\newcommand{\norm}[1]{\left\Vert #1 \right\Vert}
\newcommand{\x}{\mathbf{x}}

\newcommand{\W}{\mathbf{W}}
\renewcommand{\c}{\mathbf{c}}

\renewcommand{\u}{\mathbf{u}}

\newcommand{\y}{\mathbf{y}}
\newcommand{\diag}{\mathrm{diag}}

\renewcommand{\v}{\mathbf{v}}

\newcommand{\J}{\mathbf{J}}
\newcommand{\argmin}{\operatorname{argmin}}

\newcommand{\afree}{a_{\mathrm{free}}}

\newcommand{\map}[3]{#1:#2 \rightarrow #3}

\graphicspath{{figures/}}

\newtheorem{theorem}{Theorem} % [2026-07-20 edit: sequential numbering per Automatica style]
\newtheorem{definition}[theorem]{Definition}
\newtheorem{assumption}[theorem]{Assumption}
\newtheorem{lemma}[theorem]{Lemma}

\newtheorem{proposition}[theorem]{Proposition}

\begin{document}

\begin{frontmatter}

\title{Dynamics-matched Physical Reservoir Computing for Undersensed Traffic Prediction\thanksref{footnoteinfo}}

\runtitle{Reservoir Computing for Undersensed Traffic Prediction}

\thanks[footnoteinfo]{This paper was not presented at any IFAC
meeting. This work was supported by Toyota InfoTech Labs and by the
Natural Sciences and Engineering Research Council of Canada (NSERC)
through the Canada Research Chairs Program. Corresponding author
S.~L.~Smith.}

\author[UW]{Michael McCreesh}\ead{michael.mccreesh@uwaterloo.ca},
\author[Toyota]{Rohit Gupta}\ead{rohit.gupta@toyota.com},
\author[UW]{Stephen L. Smith}\ead{stephen.smith@uwaterloo.ca}

\address[UW]{Department of Electrical and Computer Engineering,
University of Waterloo, Waterloo, ON, Canada}
\address[Toyota]{Toyota InfoTech Labs, Mountain View, CA, USA}
          
\begin{keyword}                           % Five to ten keywords,  
reservoir computing, echo state property, traffic prediction, contraction analysis
\end{keyword}                             % keyword list or with the 
                                          % help of the Automatica 
                                          % keyword wizard

\begin{abstract}                          % Abstract of not more than 200 words.
Machine learning methods are increasingly used for traffic prediction in applications such as autonomous driving. Such predictions must be both highly accurate and immediately available, making methods with low computational costs and fast training times of interest. One such method is reservoir computing, in which the rich dynamics of a nonlinear system serves as a computational substrate and only a linear readout vector is trained. In this work we use a traffic network as the reservoir for predicting the behavior of an undersensed traffic network. This matching of the highly nonlinear dynamics allows for similar encoding between the behaviors of the reservoir and target network, enabling a more direct prediction. We show that a reservoir governed by the Improved Intelligent Driver Model (IIDM) satisfies the echo state property for a class of slowly-varying inputs. Through simulations we show that the echo state property likely holds for a larger class of inputs, and that the IIDM reservoir computer (IIDM-RC) accurately predicts an undersensed vehicle network governed by varying car-following models. We also compare with echo state networks (ESNs) and Long Short-Term Memory (LSTM) networks, finding improvements using IIDM-RC in both prediction accuracy and training time.
\end{abstract}

\end{frontmatter}

\section{Introduction}
Predicting traffic flow is important for a range of applications, including autonomous driving~\citep{YY-PZ-CD-HW-HW-DO-YHC:26}, advanced driver assistance systems~\citep{SA-MRCQ-WA-AS-SN-HA:24}, and congestion mitigation in transportation networks~\citep{JWL-ETAL:25,JW-YZ-KL-QX:23}. In these settings, accurate short-term prediction of vehicle behavior can improve safety, efficiency, and coordination. Since traffic is a nonlinear dynamical system with strongly coupled vehicle interactions, it remains challenging to develop prediction methods that are both accurate and computationally practical. In addition to the complexity of the traffic dynamics, the full state of a set of vehicles is not typically available. Instead, one is usually able to observe a subset of information, such as vehicle velocities or spacing, based on a mix of connected vehicles and limited sensors. As such, the problem of predicting vehicle platoon behavior from noisy~\citep{DW-KM-KHJ-MB:25,YM-ZF-IN-EF-TMLK-AB:25} or incomplete~\citep{MB-MA-JL-KHJ:21,JBC-MC:19} data is important, and is closely related to traffic state estimation~\citep{TS-AMB-TK-YA:17}.

Many widely used machine learning approaches for traffic prediction, including deep recurrent neural networks, Long Short-Term Memory networks, and transformers, can provide strong predictive performance but often require substantial computational resources and long training times. Although inference with a fixed trained model may be feasible in real time, traffic conditions and the available sensed information continually change, and maintaining accurate predictions requires retraining as conditions evolve. This is a significant limitation in applications such as autonomous driving and advanced driver assistance systems, where prediction must be carried out in real time on-board the vehicle or near the vehicle at the network edge, where compute resources are limited, and where learning must operate alongside real-time control~\citep{AAM:23}. Reservoir computing~\citep{HJ:10}, in which a fixed nonlinear dynamical system (the reservoir) provides the computational substrate and only an output layer is trained, is attractive since it reduces learning to a linear regression problem. As a result, it offers lower computational cost and faster retraining than methods such as LSTMs while still capturing the nonlinear dynamical structure of traffic flow~\citep{GT-TY-JBH-RN-NK-ST-HN-DN-AH:19}.

In physical reservoir computing, the reservoir is not implemented as a simulated recurrent network, but is instead provided by the dynamics of an existing physical system~\citep{KN:20}. Vehicular traffic is a natural candidate for use as a reservoir for the problem of traffic prediction: its dynamics are high-dimensional and nonlinear, and relevant quantities such as vehicle speed, position, and spacing are already measured by existing sensing and communication systems~\citep{KFC-FY-AA-FI:25,HA-HC:19,RF-TN-HA:25}. In addition, this setup matches the dynamics between the reservoir and the system to be predicted. A benefit of this approach is that the set of possible behaviors of the reservoir and that of the target system will be similar, making it easier to relate the two with the single readout vector. 

In this work we study the use of a vehicular platoon as a reservoir, particularly for application to the prediction of an undersensed vehicle network. The first goal of the paper is to establish the mathematical validity of using a vehicle platoon as a reservoir. To do so we simulate vehicle behavior using a variation of the Intelligent Driver Model~\citep{MT-AH-DH:00} introduced in Section~\ref{sec:IIDM}, and illustrate that the echo state property holds for a class of slowly-varying signals. Second, we provide a series of simulations illustrating the use of a vehicular reservoir can provide a fast and accurate prediction of the undersensed vehicle platoon's state.

%%-----------------------------------------------------------------------
\section{Car-Following Models and the Improved Intelligent Driver Model}\label{sec:IIDM}
%%-----------------------------------------------------------------------
Car-following models are representations of microscopic traffic flow, describing the behavior of each of a group of $N$ vehicles based on the environment and the surrounding vehicles. In this work we focus on longitudinal models, which describe motion along the lane, rather than lateral models, which describe lane changes. Longitudinal car-following models date back to the $1950$s, including Pipes' model~\citep{LAP:53}, Newell's model~\citep{GFN:61}, the Optimal Velocity Model (OVM)~\citep{MB-KH-AN-AS-YS:95}, and Gipps' model~\citep{PGG:81}. In this paper we consider a variation of the Intelligent Driver Model (IDM)~\citep{MT-AH-DH:00}.

The IDM is a continuous-time deterministic car-following model that defines the acceleration of the vehicle in terms of four main quantities: i) its current speed, $v$; ii) the distance to the vehicle in front, $s$; iii) the difference in speed between the vehicle and the leader, $\Delta v$; and iv) the desired speed, $v_0$. For the original IDM formulation the acceleration function is given by

\begin{align}\label{eq:IDM}
  \dot{v} = a\left(1 - \left(\frac{v}{v_0}\right)^\delta - \left(\frac{s^*(v,\Delta v)}{s}\right)^2   \right),
\end{align}
where $s^*(\cdot,\cdot)$ is the desired gap given by
\begin{align}\label{eq:desired_gap}
  s^*(v,\Delta v) = s_0 + \max\left(0,vT + \frac{v\Delta v}{2\sqrt{ab}}\right),
\end{align}
and $\Delta v = v - v_\ell$, where $v_{\ell}$ is the velocity of the leader vehicle. The behavior of the model depends only on physically interpretable parameters: the desired speed $v_0$ ($\mathrm{m/s}$), the minimum distance gap $s_0$ ($\mathrm{m}$), the desired time gap $T$ ($\mathrm{s}$), the maximum acceleration $a$ ($\mathrm{m/s^2}$), the maximum comfortable deceleration $b$ ($\mathrm{m/s^2}$), and the acceleration exponent $\delta$, which dictates how close vehicles get before decelerating.\footnote{In a multi-vehicle network we will denote the parameters of the $i$'th vehicle by $p^{(i)}$ for $p \in \{v_0,s_0,T,a,b,\delta\}$. Typical values for highway driving given in~\citep{MT-AK:13} are $v_0 = 35$, $s_0 = 2$, $T = 1$, $\delta = 4$, $a = 1$, and $b = 1.5$.}

The most important part of the IDM in comparison to other models is the `intelligent braking strategy', which is given by the third term in the desired spacing function~\eqref{eq:desired_gap}~\citep{MT-AK:13}. This term is derived from the concept of kinematic braking, and ensures that the vehicle maintains a sufficient safety margin when approaching a slower vehicle~\citep{SZ-SZ-JT-RJ-HMZ:25}. Due to the inclusion of the relative speed between vehicles the model has smooth and anticipative braking not exceeding the comfortable braking value, $b$, except in dangerous situations (such as a car cutting in front with too small a gap). Due to this safety property the IDM has been used extensively in Advanced Driver Assistance Systems, such as adaptive cruise control, where safety is critical~\citep{MT-AK:13,YL-HW-WW-LX-SL-XW:17}, as well as in connected cruise control~\citep{GO:16}.

While the IDM typically provides realistic simulations of traffic behavior, it can exhibit excessive braking when the velocity exceeds the desired velocity or the gap is far below desired, steady-state gaps larger than observed near the desired speed (with not all vehicles reaching it)~\citep{MT-AK:13}, and negative velocities for certain initial conditions and parameters~\citep{SA-AB-MTC-XG-AH-NK-AK-STM-BP-YY:22}. Multiple extensions address these issues, see~\cite[Section 3]{SZ-SZ-JT-RJ-HMZ:25} for a review. Going forward we will use an updated acceleration function, combining approaches taken in~\citep{MT-AK:13} and~\citep{SA-AB-MTC-XG-AH-NK-AK-STM-BP-YY:22}, called the Improved Intelligent Driver Model (IIDM).

The IIDM is a piecewise formulation of the IDM in which the state-space is broken into four regimes, based on the ratios $\frac{v}{v_0}$ and $\frac{s^*(v,\Delta v)}{s}$. We first define a free acceleration term, dependent on the difference between the velocity and desired velocity, as follows:
\begin{align}\label{eq:free_accel}
    \afree(v) = \begin{cases}
    a\!\left[1 - \left(\dfrac{v}{v_0}\right)^{\!\delta}\right] & v \leq v_0,\\[8pt]
    -b\!\left[1 - \left(\dfrac{v_0}{v}\right)^{\!a\delta/b}\right] & v > v_0.
  \end{cases}
\end{align}
Using the desired gap function~\eqref{eq:desired_gap}, and defining the variable $z = \frac{s^*(v,\Delta v)}{s}$, we define the full acceleration function as
\begin{align}\label{eq:IIDM}
  \dot{v} = \begin{cases}
    a(1-z^2) & v \leq v_0, z \geq 1 \\
    \afree(v)(1- z^{2a/\afree(v)}) & v \leq v_0, z < 1 \\
    \afree(v) + a(1-z^2) & v > v_0, z\geq 1 \\
    \afree(v)  &  v > v_0, z < 1 \\
    0 & v = 0, s < s_0
  \end{cases}
\end{align}
The final case takes precedence whenever it applies. The first four cases are defined as in~\citep{MT-AK:13}, and result in the vehicles reaching their desired speed and more realistic deceleration profiles. Meanwhile, setting the acceleration to $0$ when the velocity is zero and the gap is small is based on~\citep{SA-AB-MTC-XG-AH-NK-AK-STM-BP-YY:22} and prevents the case of negative velocities. While this introduces a discontinuity into the system, an argument following the proof of Theorem 5.15 of~\citep{SA-AB-MTC-XG-AH-NK-AK-STM-BP-YY:22} shows that this does not impact existence or uniqueness of solutions.

The model as written above is for a single vehicle following a leader whose velocity appears only in the calculation of $\Delta v$. However, it is important to note that this can be immediately extended to a set of $N$ vehicles, where the preceding vehicle is considered as the leader. The overall leader can then be considered as having a velocity given by a provided function, or by treating it as a vehicle with an arbitrarily large gap in front of it, where the system input is given as the desired velocity.

% ----------------------------------------------------------------------
\section{Reservoir Computing}\label{sec:reservoir_computing}
% ---------------------------------------------------------------------- 

Here we provide an explanation of the mathematical basis of reservoir computing. Reservoir computing is a machine learning framework in which a fixed dynamical system (the reservoir), exhibiting significantly richer behavior than the target system, is used to predict the target outputs; this richness allows the target trajectories to be written as a trained linear function of the reservoir states. We define the basic framework as follows. Assume we have a dynamical system defined by
\begin{align}\label{eq:RC_prediction_system}
  \dot{\v}(t) = f(\v(t),\u(t)) \qquad \y(t) = g(\v(t))
\end{align}
where $\v \in \R^n$ is the system state, $\y \in \R^\ell$ is the system output, $\u(t) \in \R^m$ is
the input and both the driving function $\map{f}{\R^n \times \R^m}{\R^n}$ and output function $\map{g}{\R^n}{\R^{\ell}}$ are unknown. We define a reservoir as a dynamical system 
\begin{align}\label{eq:RC_reservoir}
  \dot{\x}(t) = F(\x(t),\c(t)), 
\end{align}
where $\x \in X \subseteq \R^N$ is the internal state of the
reservoir and $\c(t) \in C \subseteq \R^m$ is a function of $\y(t)$ and $\u(t)$. We assume that $N \gg n$ and that $\map{F}{X \times C}{\R^N}$ is nonlinear. The goal of the reservoir computing
framework is to use the activity of the high-dimensional reservoir
dynamical system~\eqref{eq:RC_reservoir} to estimate the outputs of
the unknown system~\eqref{eq:RC_prediction_system}. The output estimate is defined as
\begin{align*}
  \tilde{\y}(t) = \J_{\text{out}}\x(t),
\end{align*}
where $\J_{\text{out}} \in \R^{\ell \times N}$ is the output matrix
trained to achieve an accurate estimate of the unknown system.

To define the echo state property, we introduce the following
notations. Let $X^{+\infty} \coloneqq \{\x^{+\infty} = \{\x(t)\}_{t = 0}^{\infty}
  ~|~ \x(t) \in X, ~ \forall t\geq 0 \}$ and
  $C^{+\infty} \coloneqq \{\c^{+\infty} = \{\c(t)\}_{t = 0}^{\infty}
  ~|~ \c(t) \in C, ~ \forall t\geq 0 \}$ denote sets of right infinite
  state and input sequences. A right infinite state sequence
  $\x^{+\infty}$ is \emph{compatible} with input state sequence
  $\c^{+\infty}$ when $\dot{\x} = F(\x(t),\c(t))$ for all
  $t \geq 0$.

\begin{definition}\label{def:echo-state-property}
  A reservoir $\map{F}{X \times C}{X}$ defined on compact sets
    $X$ and $C$ satisfies the echo state property with respect to $C$
    if and only if there exists a function
    $\map{\delta}{\Rpluseq}{\Rpluseq}$ such that for any right infinite input sequence
    $\c^{+\infty} \in C^{+\infty}$ and any two right infinite state
    vector sequences $\x_1^{+\infty}, \x_2^{+\infty} \in X^{+\infty}$
    compatible with $\c^{+\infty}$, it holds that
    $\norm{\x_1(t)-\x_2(t)} \leq \delta(t)$ for all $t \geq 0$, where
    $\lim_{t \to \infty} \delta(t) = 0$~\citep{IBY-HJ-SJK:12}.
\end{definition}

This property can be thought of in terms of convergence of trajectories of the system. In particular, any two trajectories of the system under the same input must converge to each other, which relates to to contraction analysis~\citep{WL-JJES:98,DA:02,HT-SJC-JJES:21} and the (equivalent on a compact set) concepts of incremental stability and convergent systems applied to a controlled system~\citep{GG-AM-MAP-JJF-SA-MD:26}. For classical reservoir computing, in which the function $F$ denotes a recurrent neural network with a sigmoidal activation function~\citep{HJ:10}, conditions for the ESP depend on both the activation function and the weight matrix $\W$ that defines the structure of the RNN~\citep{IBY-HJ-SJK:12}. For a more general reservoir computing setup conditions must be checked on a case-by-case basis.

What remains is to train the reservoir computing framework to get an
accurate estimate $\tilde{\y}$ of the output $\y$. The defining
feature of reservoir computing is that we only train the output matrix
$\J_{\text{out}}$, rather than the internal parameters of the
reservoir. To train, a signal $\{\c_T(t)\}$ drives both the system~\eqref{eq:RC_prediction_system} and the reservoir~\eqref{eq:RC_reservoir}; the resulting output and reservoir-state timeseries are compiled into matrices $\mathbf{Y}_T$ and $\mathbf{X}_T$, and the output matrix $\J_{\text{out}}$ is computed using a Tikhonov regularization,
\begin{align}\label{eq:Tikhonov_regularization}
  \argmin_{\J_{\text{out}}} \norm{\mathbf{Y}_T - \J_{\text{out}}\mathbf{X}_T}^2 +
  \norm{\beta\J_{\text{out}}}^2, 
\end{align}
where $\beta > 0$ is a regularization parameter. As this single linear regression is used for training the RC, it is significantly faster than methods such as backpropagation.

\section{Mathematical Guarantees for Vehicular Reservoirs}

In this section we consider the problem of using a vehicular reservoir for traffic prediction. We will consider an $N$-vehicle platoon and define the velocity of the vehicles as the state of the reservoir. The input to the reservoir will then be either the desired or actual velocity of the lead vehicle. In Figure~\ref{fig:iidm_rc} we provide a schematic of this setup, particularly for the problem of predicting a distinct vehicle network. For the simulation of such networks we will use the IIDM dynamics, and refer to this setup as IIDM-RC. For this approach to be feasible we need to show that the dynamics satisfy the ESP, which we will show is true for a class of slowly-varying signals. While this class of signals only includes signals with lower accelerations than typically observed, the simulations in Section~\ref{sec:simulations} suggest that it is also valid for physically realistic signals.

\begin{figure}
\centering
  \includegraphics[width = 0.75\linewidth]{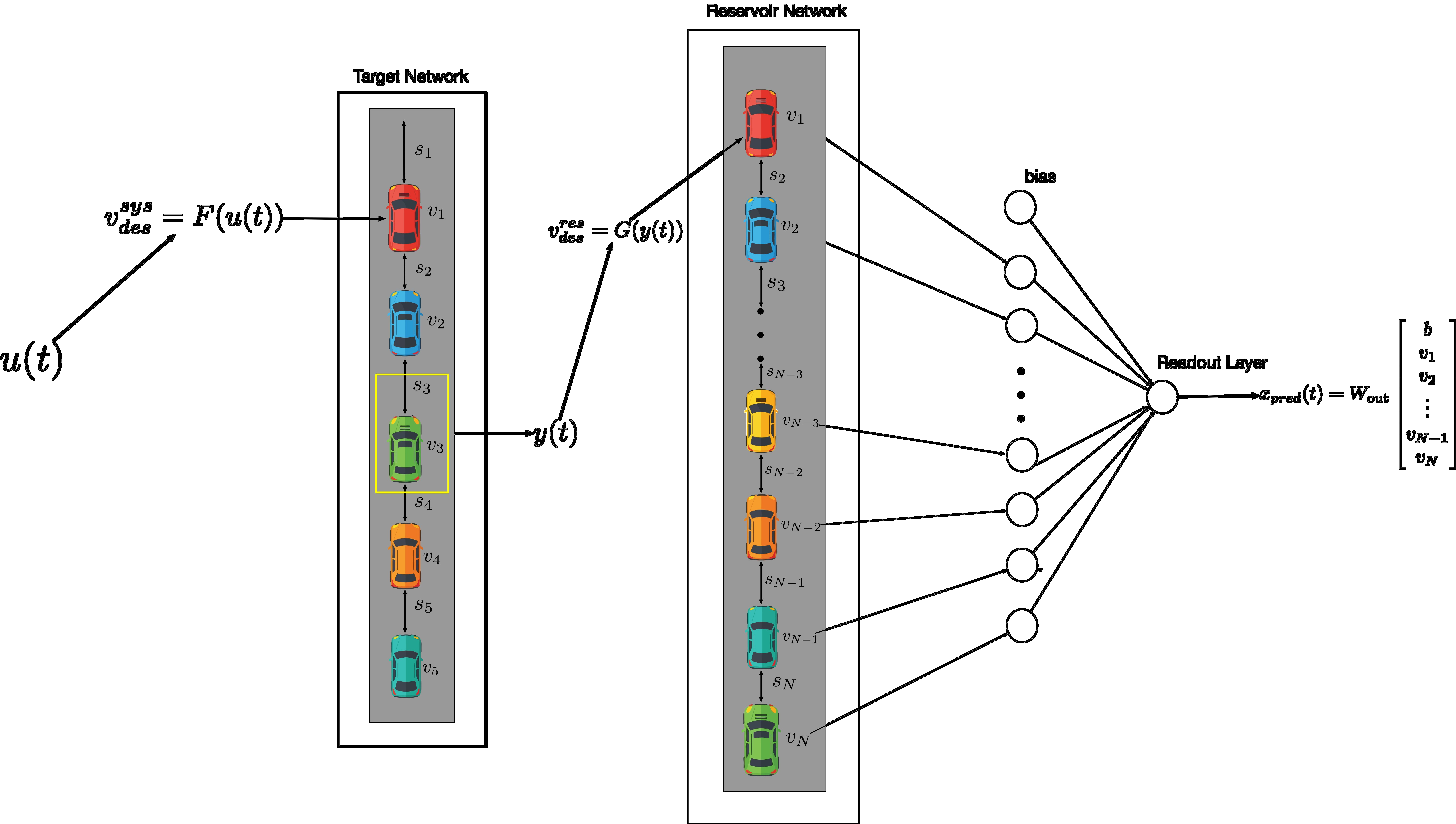} % [2026-07-21 shrink: was 0.9]
  \caption{The schematic for using a vehicular network as a reservoir for predicting an undersensed network through reservoir computing. The sensed values from the target network are mapped to be the desired velocity in the reservoir network and the velocities of the reservoir network are used as its state.}\label{fig:iidm_rc}
\end{figure}

To show that an IIDM network of $N$ vehicles satisfies Definition~\ref{def:echo-state-property} we will first show that it holds for a single following vehicle before an extension to an $N$-vehicle platoon. We will denote the dynamics for the single IIDM follower as follows. Let $x = (v,s)$ be the state of the system and $u(t)$ be the input to the system, which represents the velocity of the leading vehicle. The dynamics of the network are then given by
\begin{align}\label{eq:IIDM_follower}
  \dot{x} = F(x,u) = (A(v,s,u),u-v),
\end{align}
where $A(v,s,u)$ is the IIDM acceleration function~\eqref{eq:IIDM}. Throughout this section we will assume $v_0$ is constant and consider the class of input signals
\begin{align}\label{eq:slowly_varying_input_signal_class}
    \mathcal{U}_\rho = &\{\map{u}{\Rpluseq}{[u_{\min},u_{\max}]}~|~\\ \notag
    &\text{locally absolutely continuous and } |\dot{u}| \leq \rho \text{ a.e.}\},
\end{align}
where $u_{\max} < v_{0}$ and $u_{\min} > 2T\sqrt{ab}$. We first introduce the following definition.

\begin{definition}
  A trajectory $x(t)$ of the IIDM dynamics satisfies the \emph{gap-increasing condition} from time $T_0$ if whenever $t \geq T_0$ and $s(t) \leq s_0$, the velocity gap $w(t) = u(t) - v(t) > 0$.
\end{definition}
This condition is used for guaranteeing a lower bound on the trajectories of the IIDM. In the remainder of this section we take the gap-increasing condition as a standing assumption, together with $w(0) > 0$ and $s(0) > s_0$, so that $T_0 = 0$. The assumption is supported by the following parametric bound:
\begin{align}\label{eq:gap_increasing_regime_A}
  a\left(\frac{2u_{\min}T}{s_0} + \left(\frac{u_{\min}T}{s_0}\right)^2\right) > b,
\end{align}
under which any vehicle at a gap $s \leq s_0$ that is not already opening ($w \leq 0$) decelerates at a rate exceeding $b$, driving the velocity gap positive. This bound holds for the standard highway parameters of the IIDM given above, along with a region surrounding them due to continuity.

\begin{proposition}\label{prop:X_invariance}
  Consider the IIDM dynamics~\eqref{eq:IIDM_follower} and suppose that the gap-increasing condition is satisfied. Fix $\bar{S} > s_0$. If $s(0) \in (s_0,\bar{S}]$ and $v(0) \in [0,v_0]$, then all trajectories of~\eqref{eq:IIDM_follower} remain in the set $X \subset \R^2$ defined as
  \begin{align*}
    X = \{ (v,s) ~|~ 0 \leq v \leq v_0, s_0 \leq s \leq s_{\max} \},
  \end{align*}
  where $s_{\max} = \max(\bar{S},\frac{S_1}{z_0}) + \frac{u^2_{\max}}{2\alpha_0}$ for any fixed $z_0 \in (0,1)$, $S_1 = s_0 + u_{\max}T$, and $\alpha_0 = \afree(u_{\max})(1-z_0^{2a/\afree(u_{\max})})$.
\end{proposition}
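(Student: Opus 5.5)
We verify that each of the four boundaries of the rectangle $X$ is never crossed, treating each face separately by examining the sign of the corresponding component of $F$ in~\eqref{eq:IIDM_follower} on that face. The velocity bounds follow from the structure of $A$. The lower gap bound follows from the gap-increasing condition. The upper gap bound needs a quantitative braking estimate.

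\emph{Velocity faces.} On $v = v_0$ we have $\afree(v_0)=0$. In the first case of~\eqref{eq:IIDM} ($z \geq 1$), $\dot v = a(1-z^2) \leq 0$. For $z<1$ the factor $\afree(v)(1-z^{2a/\afree(v)})$ tends to $0$ as $v \uparrow v_0$, since $z^{2a/\afree}\to 0$ and $\afree \to 0$. So $\dot v \leq 0$ on the face, and for $v$ slightly above $v_0$ the third and fourth cases give $\dot v<0$. Hence $v \leq v_0$ is invariant. On $v=0$: if $s<s_0$, the last case gives $\dot v=0$. Otherwise, $z = s_0/s \leq 1$ at $v=0$. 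If $z<1$ the second case gives $\dot v = a(1-z^{2}) \geq 0$, since $\afree(0)=a$. If $z=1$ then $\dot v=0$. So $v\geq 0$ is preserved, and uniqueness of solutions, which the paper notes holds despite the discontinuity, rules out escape.

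\emph{Lower gap face.} We have $s(0)>s_0$. Suppose $t^*$ is the first time with $s(t^*)=s_0$. The gap-increasing condition gives $\dot s(t^*) = w(t^*) >0$. This contradicts $s$ decreasing to $s_0$ from above, since $\dot s = u - v$ is continuous. Hence $s \geq s_0$ for all $t$.

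\emph{Upper gap face, the main obstacle.} Set $S^* = \max(\bar S, S_1/z_0)$. The idea is that once $s > S_1/z_0$, the desired gap satisfies $s^* \leq S_1$ whenever $v \leq u$. Indeed, $\Delta v\leq 0$, so $s^*\leq s_0+vT\leq S_1$ and hence $z<z_0$. Then the follower is in the nearly-free regime and accelerates strongly while $v<u$. Concretely, I would show that on $\{s \geq S_1/z_0,\ v \leq u \leq u_{\max}\}$,
\[
\dot v \geq \afree(v)\bigl(1-z_0^{2a/\afree(v)}\bigr) \geq \alpha_0 .
\]
This uses two monotonicity facts. First, $\afree$ is decreasing in $v$, so $\afree(v) \geq \afree(u_{\max})$. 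Second, $c\mapsto c(1-z_0^{2a/c})$ is increasing in $c>0$, which I would check by differentiation. The gap grows only while $w=u-v>0$. During any excursion above $S^*$ that starts at time $t_1$, the quantity $w$ is driven to zero with $\dot w \leq \rho - \alpha_0$. The cleaner route is to compare with relative kinematics: treat the leader as a reference frame, and let the follower close a velocity deficit of at most $u_{\max}$ at rate at least $\alpha_0$. This bounds the extra distance by $\int w\,dt \leq u_{\max}^2/(2\alpha_0)$. The delicate point is that $u$ itself varies. I would use $\dot u \geq -\rho$ only loosely, or, if needed, bound the gap accumulated via the energy-like quantity $w^2/(2\alpha_0)$. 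Showing that $s + w_+^2/(2\alpha_0)$ is nonincreasing above $S^*$ would give $s \leq S^* + u_{\max}^2/(2\alpha_0) = s_{\max}$. Note also that the initial $s(0)\leq \bar S \leq S^*$.

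Combining the four faces shows that trajectories starting in the stated initial set remain in $X$.
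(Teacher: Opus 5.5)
\paragraph{Where the proof breaks.} Your treatment of the velocity faces and of the face $s=s_0$ matches the paper's. The only difference there is that you use a first-hitting-time argument where the paper invokes its auxiliary lower-bound invariance result. Your set-up for the upper face is also exactly the paper's: $s^*\le S_1$ when $v\le u$, hence $z\le z_0$, and $\dot v\ge\alpha_0$ via the two monotonicity facts.

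The gap is the final step, where $\dot v\ge\alpha_0$ must become the bound $u_{\max}^2/(2\alpha_0)$. None of the routes you list works when the leader accelerates.
\begin{itemize}
\item The estimate $\dot w\le\rho-\alpha_0$ gives nothing, since no relation between $\rho$ and $\alpha_0$ is assumed.
\item In the leader's frame the deficit $w$ is not closed at rate $\alpha_0$, because $\dot w=\dot u-\dot v$.
\item The quantity $s+w_+^2/(2\alpha_0)$ is not monotone. On $\{w>0,\ s\ge S_1/z_0\}$ its derivative is $w\bigl(1+(\dot u-\dot v)/\alpha_0\bigr)$. Knowing only $\dot v\ge\alpha_0$ bounds this by $w\dot u/\alpha_0$, which is positive whenever $\dot u>0$.
\end{itemize}

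\paragraph{The missing idea.} The paper measures the deficit against the constant $u_{\max}$ instead of $u(t)$. In your language, the reference frame should move at constant speed $u_{\max}$, not with the leader. During a gap-opening phase starting at $\tau$ in this region, $w(t)\le u_{\max}-v(t)\le u_{\max}-v(\tau)-\alpha_0(t-\tau)$. So $w$ vanishes within time $u_{\max}/\alpha_0$, and the gain is at most $\int_0^{u_{\max}/\alpha_0}(u_{\max}-\alpha_0 t)\,dt=u_{\max}^2/(2\alpha_0)$. Equivalently, $\psi=s+(u_{\max}-v)^2/(2\alpha_0)$ satisfies $\dot\psi=w-(u_{\max}-v)\dot v/\alpha_0\le u-u_{\max}\le 0$ on $\{v<u,\ s\ge S_1/z_0\}$, whatever $u$ does.

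\paragraph{One phase only.} Both versions control only a single phase with $w>0$; the paper then simply asserts $\dot s\le 0$ once $w$ has vanished. Here your inequality $\dot w\le\rho-\alpha_0$ is genuinely useful. Whenever $w=0$ and $s\ge S_1/z_0$, one has $\Delta v=0$, so $z\le z_0$ and $\dot v\ge\alpha_0$. Hence if $\rho<\alpha_0$, the velocity gap $w$ cannot become positive again before $s$ falls back below $\max(\bar S,S_1/z_0)$, and each excursion gains at most $u_{\max}^2/(2\alpha_0)$. Without such a condition, a leader that re-accelerates after $w$ vanishes can restart gap growth, and that case needs a separate argument.
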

\begin{pf}
  We illustrate that all trajectories remain in the set by considering behavior on each boundary face: (i) For the face $v = v_0$ we compute $\afree(v_0) = 0$. Then, examining the subcases of $A$, for $z \geq 1$ we get $A(v_0,s,u) = a(1-z^2) \leq 0$. When $z < 1$ we have $A(v_0,s,u) = \afree(v_0)(1-z^{2a/\afree(v_0)})$. Since $\afree(v_0) = 0$, this is not well-defined in the exponent. Considering the value as the limit $\lim_{v \to v_0} A(v,s,u)$, we get that it is equal to $0$. As such, for all cases when $v = v_0$, we have $\dot{v} \leq 0$ and the vector field points into $X$.
    (ii) For the face of $v = 0$ we consider when $s \geq s_0$ and $s < s_0$. When $s < s_0$ we have that $\dot{v} = 0$ directly from the model. When $s \geq s_0$ we compute $s^*(0,u) = s_0 + \max(0,0) = s_0$. As such $z \leq 1$. Since $v \leq v_0$, when $z = 1$ we get $A = \afree(0)(1-z^2) = 0$. When $z < 1$ we have $A = \afree(0)(1-z^{2a/\afree(0)})$ which is nonnegative since $\afree(0) = a$ and $z \leq 1$. Therefore, $\dot{v} \geq 0$ when $v = 0$, meaning the vector field points inwards.
    (iii) We illustrate that $s = s_0$ is an invariant positive lower bound for $s$ as follows. By the gap-increasing condition, $\dot{s}(t) > 0$ whenever $s(t) \leq s_0$. Since $s(t)$ is a continuous and differentiable function, applying Proposition~\ref{prop:lower_bound_invariance} guarantees $s(t) > s_0$ for all $t > 0$. Therefore this face of the set is invariant.
    (iv) For the face $s \leq s_{\max}$ we are interested in phases when $\dot{s} = u - v > 0$, that is the gap is increasing. During phases with $u > v$ we have that the kinematic term in $s^*$ is non-positive, meaning that $s^*(v,u) \leq s_0 + vT \leq s_0 + u_{\max}T = S_1$. Now, fix some $z_0 \in (0,1)$ and consider the region $s \geq S_1/z_0$. In this region we get
    \begin{align*}
        z = \frac{s^*}{s} \leq \frac{S_1}{S_1/z_0} = z_0 < 1.
    \end{align*}
    As such, the acceleration function is given by $A(v,s,u) = \afree(v)(1-z^{2a/\afree(v)}) \geq \afree(u_{\max})(1-z_0^{2a/\afree(u_{\max})}) = \alpha_0 > 0$, where the inequality uses $z \leq z_0 < 1$, that $\afree$ is decreasing with $v < u \leq u_{\max}$ during gap-increasing phases, and that $\alpha \mapsto \alpha(1-z_0^{2a/\alpha})$ is increasing in $\alpha$ (writing $z_0^{2a/\alpha} = e^{-\beta/\alpha}$ with $\beta = 2a\ln(1/z_0)$, its derivative is $1-(1+\beta/\alpha)e^{-\beta/\alpha} > 0$). Since $u(t) \leq u_{\max}$ and $\dot{v} \geq \alpha_0$ we get
    \begin{align*}
        u(t) - v(t) \leq u_{\max} - v(\tau) - \alpha_0(t-\tau),
    \end{align*}
    which reaches $0$ within time $\Delta t \leq u_{\max}/\alpha_0$. After this time we get $\dot{s} \leq 0$. The gap gain above $S_1/z_0$ is given by
    \begin{align*}
        s(t) - s(\tau) \leq \int_0^{u_{\max}/\alpha_0}(u_{\max}-\alpha_0t)dt = \frac{u^2_{\max}}{2\alpha_0}.
    \end{align*}
    As such, we get that $s(t) \leq \max(s(0),S_1/z_0) + \frac{u_{\max}^2}{2\alpha_0} \leq s_{\max}$.
  As we have shown the vector field is bounded in all directions all trajectories remain in the set $X$.
\end{pf}

To prove the ESP on the set $X \times \mathcal{U}_\rho$ we use a series of intermediate results. The first set of results illustrates that for any input $u(t) \in \mathcal{U}_\rho$ all trajectories of~\eqref{eq:IIDM_follower} will converge to an ellipsoid around the moving equilibrium point. The second set of results shows that any two trajectories inside this ellipsoid contract to each other, which proves the ESP for the single follower. 

In order to prove these results we provide the following definitions. For a frozen input $u \in [u_{\min},u_{\max}]$ the unique equilibrium point is $\bar{x}(u) = (u,\bar{s}(u))$, with $\bar{s}(u) = s_0 + uT$. The equilibrium manifold is the set $\mathcal{E} = \{\bar{x}(u)~|~u \in [u_{\min},u_{\max}]\}$. For some $ c > 0$ the moving ellipsoid and corresponding graph set are
\begin{align*}
    \Omega_{c}(u) &= \{x~|~ \norm{x - \bar{x}(u)}_P^2 \leq c \} \\
    \mathcal{W}_c &= \{(x,u) ~|~ u \in [u_{\min},u_{\max}],~x \in \Omega_c(u)\},
\end{align*}
where $P$ is a positive definite metric. We denote $K(x,u) = \partial A/\partial s$, $D(x,u) = -\partial A/\partial v$, and let $J(x,u)$ be the Jacobian of $F(x,u)$. We make the following assumption on the existence of a contracting metric $P$ on the ellipsoid.

\begin{assumption}\label{assumption:contraction_metric}
    Suppose that there exists $P = P^\top \succ 0$, $\bar{c} > 0$ and $\lambda > 0$ such that
    \begin{align}\label{eq:certificate}
        S(x,u) \coloneqq PJ(x,u) + J(x,u)^\top P \preceq -2\lambda P
    \end{align}
    for all $(x,u) \in \mathcal{W}_{\bar{c}}$ and $\Omega_{\bar{c}}(u) \subseteq X$ for all $u \in [u_{\min},u_{\max}]$.
\end{assumption}

In Proposition~\ref{prop:narrow_feasibility} we show the feasibility of such a metric. Finally, on $X \times [u_{\min},u_{\max}]$ we define the error coordinates $e = s - \bar{s}(u)$, $w = u - v$, and the auxiliary functions:
\begin{align*}
    g(s,u) &= A(u,s,u), \\
    U(s,u) &= \int_{\bar{s}(u)}^{s} g(\sigma,u)d\sigma, \\
    \phi(w,s,u) &= A(u,s,u) - A(u-w,s,u).
\end{align*}
We are now ready to begin providing results. The first lemma relates to the derivative of the acceleration function with respect to the velocity. 

\begin{lemma}\label{lemma:damping}
    Suppose that $u_{\min} > 2T\sqrt{ab}$.
     % Let $w = u - v$ and $\varphi(w,s,u) = A(u,s,u) - A(u-w,s,u)$.
    Then for every $u \in [u_{\min},u_{\max}]$, $s \geq s_0$, $v \in (0,v_0]$,
    \begin{align*}
        \frac{\partial A}{\partial v}(v,s,u) \leq 0,
    \end{align*}
    which guarantees that $w\phi(w,s,u) \leq 0$ for all $w$, with equality only when $w = 0$.
\end{lemma}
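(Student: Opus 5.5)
The plan is to prove the sign of $\partial A/\partial v$ regime by regime on the set $v\in(0,v_0]$, where only the first two cases of~\eqref{eq:IIDM} are active (the last case requires $v=0$). Then I will integrate along the segment between $u-w$ and $u$ to get the statement about $\phi$.

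\textbf{Step 1: monotonicity of $z$ in $v$.} For fixed $s$ and $u$, $z = s^*(v,v-u)/s$, with
\begin{align*}
  s^* = s_0 + \max\bigl(0, m(v)\bigr), \qquad m(v) = vT + \frac{v(v-u)}{2\sqrt{ab}}.
\end{align*}
Where the max is zero, $\partial z/\partial v = 0$. Where $m(v)>0$ and $v>0$, we have $T + (v-u)/(2\sqrt{ab}) > 0$, so
\begin{align*}
  m'(v) = \Bigl[T + \frac{v-u}{2\sqrt{ab}}\Bigr] + \frac{v}{2\sqrt{ab}} > 0.
\end{align*}
Hence $z$ is continuous and nondecreasing in $v$, and strictly increasing wherever the kinematic term is active. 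I will use the hypothesis $u_{\min} > 2T\sqrt{ab}$ only to rule out degenerate configurations near $v = u$, i.e.\ to keep the relevant quantities (such as $2v-u+2T\sqrt{ab}$) positive there.

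\textbf{Step 2: regime-wise derivatives.}
\emph{Regime $z\ge 1$.} Here $A = a(1-z^2)$, so $\partial A/\partial v = -2az\,\partial z/\partial v \le 0$.
\emph{Regime $z<1$.} Here $A = h(\afree(v),z)$ with $h(\alpha,z) = \alpha(1-z^{2a/\alpha})$. The computation in the proof of Proposition~\ref{prop:X_invariance} gives $\partial h/\partial\alpha > 0$ for $z\in(0,1)$. Directly, $\partial h/\partial z = -2a z^{2a/\alpha-1} \le 0$. On $[0,v_0]$, $\afree$ is strictly decreasing with $\afree' = -a\delta v^{\delta-1}/v_0^\delta$. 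The chain rule then gives
\begin{align*}
  \frac{\partial A}{\partial v} = \frac{\partial h}{\partial\alpha}\,\afree'(v) + \frac{\partial h}{\partial z}\,\frac{\partial z}{\partial v} \le 0,
\end{align*}
with the first term strictly negative for $v\in(0,v_0)$.

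\textbf{Step 3: nonsmooth points.} At $z=1$ both branches equal $0$, so $A$ is continuous across the regime switch. The kink of the $\max$ and the point $v=v_0$ (handled by the limit, as in Proposition~\ref{prop:X_invariance}) are isolated points in $v$. So $v\mapsto A(v,s,u)$ is continuous and piecewise $C^1$ with nonpositive derivative, hence nonincreasing on $(0,v_0]$. This is the precise sense in which I will read $\partial A/\partial v\le 0$ at the kinks (one-sided derivatives).

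\textbf{Step 4: sign of $\phi$.} For $w>0$ we have $u-w<u$, so $A(u-w,s,u)\ge A(u,s,u)$, i.e.\ $\phi\le0$. For $w<0$ the inequality reverses and $\phi\ge 0$. In both cases $w\phi\le 0$, and trivially $\phi=0$ at $w=0$.

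\textbf{Main obstacle: strict inequality for $w\neq0$.} The plan is to show that $A(\cdot,s,u)$ is \emph{strictly} decreasing, so that its derivative cannot vanish on any subinterval.
\emph{In the $z<1$ regime,} the $\afree'$ term is strictly negative on $(0,v_0)$, which suffices.
\emph{In the $z\ge1$ regime with $s>s_0$,} $z\ge1$ forces $s^*>s_0$, so the kinematic term is active. Step 1 then gives $\partial z/\partial v>0$ and hence $\partial A/\partial v<0$.
\emph{Remaining edge cases} are $s=s_0$ with $m\le 0$, and $v$ near $v_0$. For these I would argue they occupy at most a single point in $v$. A monotone continuous function whose derivative is negative except at finitely many points is strictly decreasing, which gives strict inequality $w\phi<0$ for $w\ne0$.
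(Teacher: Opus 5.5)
Your Steps 1--4 are sound. Step 1 is actually slightly cleaner than the paper's version. Writing $m(v)=v\bigl[T+(v-u)/(2\sqrt{ab})\bigr]$ shows $m'>0$ wherever $m>0$ and $v>0$, without using $u_{\min}>2T\sqrt{ab}$. Likewise, at $v=u$ the quantity $2v-u+2T\sqrt{ab}=u+2T\sqrt{ab}$ is positive unconditionally, so the hypothesis is not doing the job you assign it.

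The gap is in the strictness argument. Your edge case ``$s=s_0$ with $m\le 0$'' is not a single point. For $v>0$, $m(v)\le 0$ exactly when $v\le u-2T\sqrt{ab}$, and under the lemma's own hypothesis this is the nondegenerate interval $(0,u-2T\sqrt{ab}]$. On that interval $s^*=s_0$, so at $s=s_0$ you have $z\equiv 1$ and $A(v,s_0,u)=a(1-z^2)\equiv 0$. For example, with $a=1$, $b=1.5$, $T=1$, $s_0=2$, $u=20$, $A(\cdot,s_0,u)$ vanishes on $(0,17.55]$. So $A(\cdot,s_0,u)$ is \emph{not} strictly decreasing on $(0,v_0]$. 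Your plan of proving strict monotonicity everywhere via ``negative derivative except at finitely many points'' fails, and $s=s_0$ is covered by the statement ($s\ge s_0$).

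The conclusion $w\phi<0$ still holds, but it needs a different idea, and it is the one the paper uses: localize strictness at the endpoint $v=u$ of the segment joining $u-w$ and $u$.
\begin{itemize}
\item Since $s^*(u,u)=s_0+uT>s_0$, the max is unclamped on a neighborhood of $v=u$. There $\partial z/\partial v>0$, so $\partial A/\partial v<0$ in either regime, and $u\le u_{\max}<v_0$ keeps you away from the point $v_0$.
\item Combine this with the global nonincrease from your Step 3. For $w>0$ and small $\varepsilon\in(0,w)$, you get $A(u-w,s,u)\ge A(u-\varepsilon,s,u)>A(u,s,u)$, hence $\phi<0$.
\item Symmetrically, for $w<0$ you get $A(u,s,u)>A(u+\varepsilon,s,u)\ge A(u-w,s,u)$, hence $\phi>0$.
\end{itemize}
The paper phrases this as follows: in $w\phi=-w^2\int_0^1(-\partial A/\partial v)(u-(1-\theta)w,s,u)\,d\theta$, the integrand is strictly positive at $\theta=1$, and by one-sided continuity also near it. Replace your ``single point'' claim with this localization and the proof goes through.
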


\begin{pf}
    Let $f(z) \coloneqq 1 - z^c + cz^c \ln z$ where $c = 2a/\afree(v) > 0$. Then $f(1) = 0$ and $f'(z) = c^2z^{c-1}\ln z \leq 0$ on $(0,1]$, so $f \geq 0$ there. On the branch $z < 1$ we compute
    \begin{align}\label{eq:acceleration_derivative}
        \frac{\partial A}{\partial v} = \afree'(v)f(z) - 2az^{c-1}\frac{1}{s}\frac{\partial s^*}{\partial v},
    \end{align}
    On $z < 1$ it holds that $\afree' \leq 0$ and combined with $f \geq 0$, the first term is less than or equal to $0$. For the second term, when the maximum in~\eqref{eq:desired_gap} is clamped at zero we have that $\partial s^*/\partial v = 0$, while on the unclamped side we get $\partial s^*/\partial v = T + \frac{2v-u}{2\sqrt{ab}}$. On the unclamped side, $v > u - 2T\sqrt{ab}$, $u \geq 2T\sqrt{ab}$ implies $u - 2T\sqrt{ab} \geq \frac{u}{2} - T\sqrt{ab}$, so the negative range is entirely inside the clamped region, where $\partial s^*/\partial v = 0$. Therefore $\partial s^*/\partial v \geq 0$ wherever it is nonzero, and the second term is less than or equal to $0$. Therefore, when $z < 1$ we have that $\partial A/\partial v \leq 0$. For $z \geq 1$, $\partial A/\partial v = -2az(\partial s^*/\partial v)/s \leq 0$. For the second claim, we write
    \begin{align*}
        w\phi(w,s,u) = -w^2\int_0^1 \left(-\frac{\partial A}{\partial v}(u - (1- \theta)w,s,u)\right)d\theta.
    \end{align*}
    From the prior steps, for all values of $\theta$, the integrand is $-\partial A/\partial v \geq 0$, so $w\phi \leq 0$. For $w \neq 0$, note that when $\theta = 1$ the velocity argument is $u$, giving a state of $(u,s,u)$. At this point we have $s^*(u,u) = s_0 + uT > s_0$, so the maximum in~\eqref{eq:desired_gap} is not clamped and we can compute
    \begin{align*}
        \frac{\partial s^*}{\partial v} = T + \frac{u}{2\sqrt{ab}} > 0.
    \end{align*}
    For $z < 1$ this makes the second term in~\eqref{eq:acceleration_derivative} strictly negative while the first term is nonpositive, and for $z \geq 1$ we have $\partial A/\partial v = -2az(\partial s^*/\partial v)/s < 0$. It follows that the integrand is strictly positive at $\theta = 1$, and hence the integral is strictly positive and $w\phi < 0$ for $w \neq 0$.
\end{pf}

We now consider the behavior when there is a fixed input $u \in [u_{\min},u_{\max}]$.

\begin{proposition}\label{prop:frozen_convergence}
    Suppose that the gap increasing condition holds and that $2T\sqrt{ab} < u_{\min} \leq u \leq u_{\max} < v_0$. Then, for any fixed $u \in [u_{\min},u_{\max}]$ every solution starting in $X$ converges to $\bar{x}(u)$.
\end{proposition}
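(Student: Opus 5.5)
We will use a Lyapunov/LaSalle argument in the error coordinates $e = s - \bar{s}(u)$ and $w = u - v$ introduced above. With $u$ frozen, the dynamics~\eqref{eq:IIDM_follower} become $\dot{e} = \dot{s} = w$ and
\begin{align*}
\dot{w} = -A(u-w,s,u) = -g(s,u) + \phi(w,s,u).
\end{align*}
This is a nonlinear mass--spring--damper. The spring force is $g(s,u) = A(u,s,u)$, and $\phi$ is a damping term whose sign is fixed by Lemma~\ref{lemma:damping}. The candidate energy function is
\begin{align*}
V(s,w) = \tfrac12 w^2 + U(s,u), \qquad U(s,u) = \int_{\bar{s}(u)}^{s} g(\sigma,u)\,d\sigma .
\end{align*}
Along trajectories this gives $\dot V = w\dot w + g(s,u)w = w\phi(w,s,u) \le 0$. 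Lemma~\ref{lemma:damping} applies because $v\in[0,v_0]$ on $X$ by Proposition~\ref{prop:X_invariance}. The case $v=0$ needs a short separate remark: there the one-sided derivative argument still applies, or $v=0$ is only reached on a set of measure zero.

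The first step is to show that $U$ is positive definite about $\bar{s}(u)$. At $v=u<v_0$ and $\Delta v=0$ we have $s^*(u,u)=s_0+uT=\bar{s}(u)$, so $z=\bar{s}(u)/s$. If $s<\bar{s}(u)$, then $z>1$ and $g=a(1-z^2)<0$. If $s>\bar{s}(u)$, then $z<1$ and $g=\afree(u)(1-z^{2a/\afree(u)})>0$, since $\afree(u)>0$ for $u<v_0$. Hence $g(\sigma,u)$ has the sign of $\sigma-\bar{s}(u)$. It follows that $U\ge 0$, with equality only at $s=\bar{s}(u)$, and that $V$ is positive definite in $(e,w)$ with unique zero at $\bar{x}(u)$. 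Here $\bar{x}(u)$ lies in the interior of $X$, because $\bar{s}(u) > s_0$ and $0<u<v_0$.

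Next we apply LaSalle's invariance principle on the compact forward-invariant set $X$ from Proposition~\ref{prop:X_invariance}. The gap-increasing condition and the initial conditions enter only through that proposition. Trajectories are bounded and $V$ is nonincreasing, so every solution converges to the largest invariant subset of $\{\dot V = 0\}$. By the strict part of Lemma~\ref{lemma:damping}, $\dot V = 0$ forces $w\equiv 0$. Then $\dot w = -g(s,u) = 0$, which forces $s=\bar{s}(u)$. The largest invariant set is therefore $\{\bar{x}(u)\}$, and every solution converges to $\bar{x}(u)$.

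The main obstacle is regularity. The vector field is only piecewise smooth: it has branch switches at $z=1$ and $v=v_0$, the clamp in $s^*$, and the discontinuous stop rule at $v=0,\ s<s_0$. LaSalle therefore needs care. We would first check that $A$ is continuous and locally Lipschitz on the relevant part of $X$. The stop rule is inactive there because $s>s_0$ by part (iii) of the proof of Proposition~\ref{prop:X_invariance}. The branches match at $z=1$, since both give zero acceleration, and at $v=v_0$ by the limit computed in that proof. With this continuity, $V$ is $C^1$ and solutions are unique, as noted after~\eqref{eq:IIDM}, so the standard invariance principle applies. If any issue remains, we would argue through the $\omega$-limit set directly. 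That set is nonempty, compact and invariant, $V$ is constant on it, and the argument above shows it equals $\{\bar{x}(u)\}$.
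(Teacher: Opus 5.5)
Your proposal is correct and follows the paper's proof essentially step for step: the same energy $E = \tfrac12 w^2 + U(s,u)$, the identity $\dot{E} = w\phi(w,s,u) \le 0$ from Lemma~\ref{lemma:damping}, and LaSalle's principle on the compact invariant set $X$, which reduces the limit set to $\{\bar{x}(u)\}$. Your sign analysis showing that $g(\cdot,u)$ vanishes only at $\bar{s}(u)$, together with your regularity remarks, usefully fills in steps that the paper asserts without justification.
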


\begin{pf}
    We consider the energy function $E(x,u) = \frac{1}{2}w^2 + U(s,u)$. Computing $\dot{E}$ along solutions of~\eqref{eq:IIDM_follower}, we get 
    \begin{align*}
        \dot{E} = w\dot{w} + \frac{\partial U}{\partial s}\dot{s} + \frac{\partial U}{\partial u}\dot{u}.
    \end{align*}
Since $u$ is constant the last term is equal to zero. For the remaining terms we note that $\frac{\partial U}{\partial s} = g(s,u)$, $\dot{w} = \dot{u} - \dot{v} = -A(u-w,s,u) = -g(s,u) + \phi(w,s,u)$, and $\dot{s} = u - v = w$. Therefore
\begin{align*}
    \dot{E} = w(-g(s,u) + \phi(w,s,u)) + g(s,u)w = w\phi(w,s,u),
\end{align*}
which is less than or equal to zero by Lemma~\ref{lemma:damping}, with equality if and only if $w = 0$. As such, $E$ is a Lyapunov function. We now aim to apply LaSalle's Invariance Principle~\citep{HKK:02}. We have that $X$ is a compact and positively invariant set, $E$ is $C^1$ and bounded below by $0$ on $X$. Therefore, by LaSalle's Invariance Principle every solution converges to the largest positively invariant subset of
\begin{align*}
    \mathcal{S} \coloneqq \{x \in X ~|~ \dot{E} = 0\} &= \{x \in X ~|~ w\phi(w,s,u) = 0 \} \\
    &= \{x \in X ~|~ w = 0\}.
\end{align*}
On $\mathcal{S}$ we have $w = 0$, which gives $v = u$. For a solution to remain in $\mathcal{S}$ we need $\dot{w} = 0$ as well. Calculating gives
\begin{align*}
    \dot{w} = -A(u-w,s,u)\bigg|_{w = 0} = -A(u,s,u) = -g(s,u) = 0.
\end{align*}
This forces $g(s,u) = 0$, which gives $s = \bar{s}(u)$ for an invariant solution. Therefore, the largest invariant subset of $\mathcal{S}$ is $\{x \in X~|~w = 0, s = \bar{s}(u)\} = \{\bar{x}(u)\}$. Therefore, for any fixed input $u$, all solutions beginning in $X$ converge to $\bar{x}(u)$.
\end{pf}

We now show that for fixed inputs all solution will reach an $\epsilon$ neighborhood of $\bar{x}(u)$ in uniform time. For this we consider the $\epsilon$-neighborhood of the equilibrium manifold defined as
    \begin{align*}
        \mathcal{E}_\epsilon = \{x \in X~|~ \norm{x - \bar{x}(u)} \leq \epsilon \text{ for some } u \in [u_{\min},u_{\max}]\}
    \end{align*}
    for $\epsilon > 0$. Let $\epsilon_0$ be the largest value such that $\mathcal{E}_{\epsilon_0}$ lies entirely in the active-braking unsaturated region of $A$, that is, the region where $v < v_0$, $z < 1$, and the maximum in~\eqref{eq:desired_gap} is not clamped at zero.

\begin{lemma}\label{lemma:uniform_entry}
    For every $\epsilon > 0$ there exists $T(\epsilon) < \infty$, depending on $\epsilon$ only through $\min(\epsilon,\epsilon_0)$, such that for every constant $u \in [u_{\min},u_{\max}]$ and every initial condition $x_0 \in X$, the solution satisfies $\norm{x - \bar{x}(u)} \leq \epsilon$ for all $t \geq T(\epsilon)$.
\end{lemma}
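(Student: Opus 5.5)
The plan is to upgrade the pointwise convergence of Proposition~\ref{prop:frozen_convergence} to a uniform entry time with a compactness argument over the pair $(x_0,u) \in X \times [u_{\min},u_{\max}]$, using the energy $E(x,u) = \tfrac{1}{2}w^2 + U(s,u)$ as the organizing quantity. Since the conclusion for $\epsilon \geq \epsilon_0$ follows from the one for $\epsilon_0$, it suffices to treat $\epsilon' = \min(\epsilon,\epsilon_0)$ and set $T(\epsilon) := T(\epsilon')$; this gives the stated dependence on $\epsilon$.

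\textbf{Step 1: sublevel set inside the ball.} We first show that for each $\epsilon' \le \epsilon_0$ there is $\eta > 0$, independent of $u$, such that $\{x \in X : E(x,u) \le \eta\} \subseteq \{x : \norm{x - \bar{x}(u)} \le \epsilon'\}$ for every $u$. The function $g(\cdot,u)$ vanishes only at $\bar{s}(u)$. It is negative below $\bar{s}(u)$ because there $z>1$ and $A = a(1-z^2)$, and positive above. Hence $U(\cdot,u) \geq 0$, with equality only at $\bar{s}(u)$, so $E(\cdot,u)$ is positive definite about $\bar{x}(u)$. The minimum of $E$ over the compact set $\{(x,u) \in X \times [u_{\min},u_{\max}] : \norm{x-\bar{x}(u)} \ge \epsilon'\}$ is positive, because $E$ is continuous in $(x,u)$. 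Taking $\eta$ equal to this minimum (minus any slack) gives the inclusion. Since $\dot{E} = w\phi \le 0$ for frozen $u$ by Lemma~\ref{lemma:damping}, once a trajectory has $E \le \eta$ it keeps $E \le \eta$. It therefore stays in the $\epsilon'$-ball for all later times, which yields ``for all $t \geq T(\epsilon)$'' rather than mere entry.

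\textbf{Step 2: uniform hitting time of $\{E < \eta\}$.} For each $(x_0,u)$, Proposition~\ref{prop:frozen_convergence} gives a finite time $t(x_0,u)$ with $E(\varphi_{t}(x_0,u),u) < \eta/2$. By continuous dependence of solutions on initial data and on the parameter $u$, the strict inequality persists on an open neighborhood of $(x_0,u)$ at that same time. We cover the compact set $X \times [u_{\min},u_{\max}]$ by finitely many such neighborhoods and take $T(\epsilon)$ to be the largest of the associated times. By monotonicity of $E$ from Step 1, every trajectory then satisfies $E \le \eta$, and hence $\norm{x-\bar x(u)} \le \epsilon'$, for all $t \ge T(\epsilon)$.

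\textbf{Main obstacle.} The hard part is justifying continuous dependence in Step 2, because the IIDM vector field is only piecewise smooth. It has switching at $z=1$ and at $v=v_0$, clamping in $s^*$, and the discontinuous case $v=0,\,s<s_0$. I would first note that the discontinuous case is never active: by Proposition~\ref{prop:X_invariance} and the gap-increasing condition, $s > s_0$ along trajectories. On the remaining region $A$ is continuous and locally Lipschitz, since the branches agree at $z=1$ and at $v=v_0$ and $\max(0,\cdot)$ is Lipschitz. The function $\afree$ vanishes at $v_0$, which makes the exponent $2a/\afree$ singular. This is handled either by the limiting argument already used in Proposition~\ref{prop:X_invariance} or by observing that for $u \le u_{\max} < v_0$ trajectories spend time near $v=v_0$ only when $E$ is bounded away from $\eta$. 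If Lipschitz continuity near $v=v_0$ proves delicate, a fallback is to get a direct uniform decrease rate. By LaSalle, on the compact set $\{\eta \le E\}\cap X$ a trajectory cannot stay near $\{w=0\}$ for long unless it is near $\bar{x}(u)$. One can then lower-bound the decrease of $E$ over each fixed time window by a constant $\kappa>0$ uniform in $(x_0,u)$. This gives $T \le (\max E)/\kappa$ times the window length.
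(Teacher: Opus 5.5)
Your proposal is correct and follows essentially the same route as the paper. Both arguments use an invariant energy sublevel set inside the $\epsilon'$-ball, pointwise hitting times from Proposition~\ref{prop:frozen_convergence}, continuous dependence of the frozen-input flow on $(x_0,u)$, and a finite subcover of the compact set $X \times [u_{\min},u_{\max}]$. The differences are minor and, if anything, tidier on your side. You get the sublevel-set inclusion from positivity and joint continuity of $E$ on a compact set, rather than from the paper's local quadratic sandwich built from $K_{\min}^{\epsilon'}$ and $K_{\max}^{\epsilon'}$. You also pass explicitly from the individual covering times to the common time $T(\epsilon)$ via monotonicity of $E$. Finally, you actually sketch a justification for the Lipschitz and continuous-dependence step, which the paper simply asserts.
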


\begin{pf}
    Fix $\epsilon' = \min(\epsilon,\epsilon_0) > 0$. On the compact neighborhood $\mathcal{E}_{\epsilon'}$, $K$ is continuous and strictly positive, so there exist local bounds
    % [2026-07-20 edit: removed the analogous D bounds, which were never used in this proof]
    \begin{align*}
        0 &< K_{\min}^{\epsilon'} \leq K(x,u) \leq K^{\epsilon'}_{\max} < \infty
    \end{align*}
    uniform over $\mathcal{E}_{\epsilon'} \times [u_{\min},u_{\max}]$. With these local constants it then holds that
    \begin{align*}
        &\frac{1}{2}\min\{1,K^{\epsilon'}_{\min}\}\norm{x-\bar{x}(u)}^2 \leq E(x,u)\\
        &E(x,u) \leq \frac{1}{2}\max\{1,K_{\max}^{\epsilon'}\}\norm{x-\bar{x}(u)}^2.
    \end{align*}
    Now, define
    \begin{align*}
        c(\epsilon') &= \frac{{\epsilon'}^2}{2}\min\{1,K^{\epsilon'}_{\min}\}\\
        \delta({\epsilon'}) &= \epsilon'\sqrt{\frac{\min\{1,K^{\epsilon'}_{\min}\}}{\max\{1,K^{\epsilon'}_{\max}\}}}.
    \end{align*}
    On $\mathcal{E}_{\epsilon'}$ it then holds that $\norm{x - \bar{x}(u)} \leq \delta(\epsilon') \Rightarrow E(x,u) \leq c(\epsilon') \Rightarrow \norm{x - \bar{x}(u)} \leq \epsilon' \leq \epsilon$. Since $U(s,u)$ is strictly increasing in $|s - \bar{s}(u)|$ and $\frac{1}{2}w^2$ is strictly increasing in $|w|$, $E$ is positive definite in $(w,s-\bar{s}(u))$ on $X$. As such, $\{E \leq c(\epsilon')\}$ is a compact neighborhood contained in $\mathcal{E}_{\epsilon'}$. Then, since $\dot{E} = w\phi \leq 0$ by Lemma~\ref{lemma:damping}, the sublevel set $\{E \leq c(\epsilon')\}$ is positively invariant. It follows that once a solution enters $B_{\delta(\epsilon')}(\bar{x}(u))$ at time $t_1$ it lies in $\{E \leq c(\epsilon')\} \subset \mathcal{E}_{\epsilon'}$ for all $t \geq t_1$.

    Now, by Proposition~\ref{prop:frozen_convergence}, for every $(x_0,u) \in X \times [u_{\min},u_{\max}]$, the frozen input solution satisfies $x(t;x_0,u) \to \bar{x}(u)$. As such, there exists finite $t^*(x_0,u)$ such that $x(t^*(x_0,u);x_0,u) \in B_{\delta(\epsilon')/2}(\bar{x}(u))$\footnote{Here the notation $x(t;x_0,u)$ represents the value $x(t)$ from initial condition $x_0$ with constant input $u$.}. Now, since $F$ is uniformly Lipschitz over $X \times [u_{\min},u_{\max}]$ and $\bar{x}(\cdot)$ is continuous in $u$, the solution map $(x_0,u) \to x(t;x_0,u)$ is continuous in $(x_0,u)$ uniformly on compact time intervals. As such, for each $(x_0,u)$ there exists an open neighborhood $\mathcal{N}(x_0,u) \subset X \times [u_{\min},u_{\max}]$ such that for all $(y_0,u') \in \mathcal{N}(x_0,u)$ it holds that $\norm{x(t^*(x_0,u);y_0,u') - \bar{x}(u')} \leq \delta(\epsilon')$. The collection $\{\mathcal{N}(x_0,u)\}$ then forms a cover for the compact set $X \times [u_{\min},u_{\max}]$. As such, there exists a finite subcover, indexed by $k = 1, \dots, N$, and using this set $T(\epsilon') = \max_{k \in \{1,\dots,N\}} t^*(x_0^k,u^k) < \infty$. Then, every $(x_0,u) \in X \times [u_{\min},u_{\max}]$ is in at least one $\mathcal{N}(x_0^k,u^k)$, so $x(T(\epsilon');x_0,u) \in B_{\delta(\epsilon')}(\bar{x}(u))$.

    Combining the fact that by $T(\epsilon')$ all solutions have entered $B_{\delta(\epsilon')}(\bar{x}(u))$ and they stay in $\mathcal{E}_{\epsilon'}$ for all time afterwards, we have
    \begin{align*}
        \norm{x(t;x_0,u) - \bar{x}(u)} \leq \epsilon' \leq \epsilon \quad \text{for all } t \geq T(\epsilon')
    \end{align*}
    uniformly over all $(x_0,u) \in X \times [u_{\min},u_{\max}]$.
\end{pf}

To conclude the first section of the proof we utilize the prior lemmas to show that a slowly-varying input $u(t) \in \mathcal{U}_\rho$ will converge to the moving ellipsoid.

\begin{proposition}\label{prop:finite_entry_time}
    Assume that the gap-increasing condition holds, $2T\sqrt{ab} < u_{\min} < u_{\max} < v_0$, and Assumption~\ref{assumption:contraction_metric} holds. Let $r = \sqrt{\bar{c}/\lambda_{\max}(P)}$, $T_{acq} = T(r/2)$ from Lemma~\ref{lemma:uniform_entry} and define
    \begin{align*}
        \rho_{acq} = \frac{r/2}{\frac{1}{2}b_F T_{acq}^2 e^{LT_{acq}}+\norm{(1,T)}T_{acq}},
    \end{align*}
    where $L, b_F$ are Lipschitz bounds for $F$ in $x$ and $u$ on $X \times [u_{\min},u_{\max}]$. Then, for every $u \in \mathcal{U}_\rho$ with $\rho \leq \rho_{acq}$ and every $x(0) \in X$, $x(T_{acq}) \in \Omega_{\bar{c}}(u(T_{acq}))$.  
\end{proposition}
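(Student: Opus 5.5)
Our plan is a perturbation argument that compares the true trajectory, driven by the slowly-varying input $u(t)$, with the frozen-input trajectory driven by the constant $u(0)$. Fix $x(0) \in X$ and $u \in \mathcal{U}_\rho$, and let $\hat{x}(t) = x(t;x(0),u(0))$ be the frozen solution. Lemma~\ref{lemma:uniform_entry} gives $\norm{\hat{x}(T_{acq}) - \bar{x}(u(0))} \leq r/2$ uniformly in $x(0)$ and $u(0)$. The triangle inequality then splits the target distance into three pieces:
\begin{align*}
  \norm{x(T_{acq}) - \bar{x}(u(T_{acq}))} \leq{}& \norm{x(T_{acq}) - \hat{x}(T_{acq})} + \norm{\hat{x}(T_{acq}) - \bar{x}(u(0))} \\
  &+ \norm{\bar{x}(u(0)) - \bar{x}(u(T_{acq}))}.
\end{align*}
The middle term is at most $r/2$. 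It therefore suffices to show that the other two terms together are at most $r/2$ when $\rho \leq \rho_{acq}$.

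The third term is the drift of the equilibrium manifold. Since $\bar{x}(u) = (u, s_0 + uT)$ is affine in $u$, we have $\norm{\bar{x}(u(0)) - \bar{x}(u(T_{acq}))} = \norm{(1,T)}\,|u(T_{acq}) - u(0)| \leq \norm{(1,T)}\rho T_{acq}$. This uses local absolute continuity and $|\dot{u}| \leq \rho$ a.e. It produces exactly the second summand in the denominator of $\rho_{acq}$.

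For the first term we use a Gr\"onwall estimate. Both $x$ and $\hat{x}$ stay in $X$ by Proposition~\ref{prop:X_invariance}, so the Lipschitz bounds apply. Setting $\Delta(t) = \norm{x(t) - \hat{x}(t)}$, we get
\begin{align*}
  \Delta(t) \leq \int_0^t \left( L\Delta(\tau) + b_F|u(\tau) - u(0)| \right) d\tau \leq L\int_0^t \Delta(\tau)\,d\tau + \tfrac{1}{2}b_F\rho t^2 .
\end{align*}
Gr\"onwall's inequality with the nondecreasing forcing $\tfrac12 b_F\rho t^2$ then gives $\Delta(T_{acq}) \leq \tfrac12 b_F\rho T_{acq}^2 e^{LT_{acq}}$. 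Adding the two estimates gives $\rho\,(\tfrac12 b_F T_{acq}^2 e^{LT_{acq}} + \norm{(1,T)}T_{acq}) \leq r/2$ by the definition of $\rho_{acq}$. Hence $\norm{x(T_{acq}) - \bar{x}(u(T_{acq}))} \leq r$.

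Finally we convert to the $P$-metric. We have $\norm{x - \bar{x}}_P^2 \leq \lambda_{\max}(P)\norm{x-\bar{x}}^2 \leq \lambda_{\max}(P) r^2 = \bar{c}$, so $x(T_{acq}) \in \Omega_{\bar{c}}(u(T_{acq}))$.

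We expect the main obstacle to be justifying the Lipschitz constants $L$ and $b_F$. The IIDM vector field is piecewise: it has the clamp in $s^*$ and the $v=0, s<s_0$ override. We would argue that on $X$ (where $s \geq s_0$) the override is inactive. Each branch is $C^1$ on the compact set, and the branches agree continuously across $z = 1$ and across the clamp. This makes $F$ globally Lipschitz on $X \times [u_{\min},u_{\max}]$, which is the same fact already invoked in Lemma~\ref{lemma:uniform_entry}.

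A secondary subtlety is that the Gr\"onwall step needs the comparison solution to remain in the domain where these bounds hold. Invariance of $X$ handles this for both trajectories, since the frozen input $u(0)$ lies in $[u_{\min},u_{\max}]$.
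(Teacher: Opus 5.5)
Your proposal is correct and follows the paper's proof essentially step for step. It uses the same frozen-input comparison solution from $x(0)$, the same three-term triangle inequality, the same Gr\"onwall bound $\tfrac12 b_F\rho T_{acq}^2e^{LT_{acq}}$, the same affine drift bound $\norm{(1,T)}\rho T_{acq}$, and the same final inclusion of the Euclidean ball of radius $r$ in $\Omega_{\bar{c}}(u(T_{acq}))$. The only differences are cosmetic: you use the integral form of Gr\"onwall, which avoids differentiating a norm, and you spell out the $\lambda_{\max}(P)$ conversion that the paper states in one line.
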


\begin{pf}
    Let $y(\cdot)$ solve the frozen problem $\dot{y} = F(y,u(0))$, where $y(0) = x(0)$, both of which stay in $X$ by invariance. Letting $\delta = x - y$, we have that $\norm{\dot{\delta}} \leq L\norm{\delta} + b_F|u(t) - u(0)| \leq L\norm{\delta} + b_F\rho t$. Applying Gr\"{o}nwall's inequality gives $\norm{\delta(t)} \leq b_F\rho \int_0^t se^{L(t-s)}ds \leq \frac{1}{2}b_F\rho t^2 e^{Lt}$. Letting $t = T_{acq}$ gives 
    \begin{align}\label{eq:prop_inequality1}
     \norm{\delta(T_{acq})} = \norm{x(T_{acq}) - y(T_{acq})} \leq \frac{1}{2}b_F\rho T_{acq}^2e^{LT_{acq}}. 
    \end{align}
    Then, by Lemma~\ref{lemma:uniform_entry} with $\epsilon = r/2$, applied to the frozen input $u(0)$ and initial condition $y(0) = x(0) \in X$, we get 
    \begin{align}\label{eq:prop_inequality2}
    \norm{y(T_{acq})-\bar{x}(u(0))} \leq r/2.
    \end{align}
    Finally, since $\bar{x}(u) = (u,s_0 + uT)$ is affine in $u$, we get
    \begin{align}
        \norm{\bar{x}(u(T_{acq})) - \bar{x}(u(0))} &= \norm{(1,T)}|u(T_{acq}) - u(0)| \notag \\ & \leq \norm{(1,T)}\rho T_{acq}, \label{eq:prop_inequality3}
    \end{align}
    since $|u(T_{acq})-u(0)| \leq \rho T_{acq}$ by $u \in \mathcal{U}_\rho$. Now, by combining~\eqref{eq:prop_inequality1},~\eqref{eq:prop_inequality2}, and~\eqref{eq:prop_inequality3} along with the triangle inequality we have
    \begin{align*}
        &\norm{x(T_{acq})-\bar{x}(u(T_{acq})) }\\
        &\leq \norm{x(T_{acq}) - y(T_{acq})} + \norm{y(T_{acq}) - \bar{x}(u(0))} \\ 
        &\qquad + \norm{\bar{x}(u(0)) - \bar{x}(u(T_{acq}))} \\
        &\leq \frac{1}{2}b_F \rho T_{acq}^2e^{LT_{acq}} + \frac{r}{2} + \norm{(1,T)}\rho T_{acq} \\
        &= \frac{r}{2} + \rho\left(\frac{1}{2}b_FT_{acq}^2e^{LT_{acq}} + \norm{(1,T)}T_{acq} \right).
    \end{align*}
    By the definition of $\rho_{acq}$ we get
    \begin{align*}
        \rho\left(\frac{1}{2}b_FT_{acq}^2e^{LT_{acq}} + \norm{(1,T)}T_{acq} \right) \leq \frac{r}{2}.
    \end{align*}
    Combining gives $\norm{x(T_{acq}) - \bar{x}(u(T_{acq}))} \leq r$. By the definition of $r$ we have that the Euclidean ball of radius $r$ is contained in $\Omega_{\bar{c}}(u(T_{acq}))$, completing the proof.
\end{pf}

This completes the first component of the proof of the ESP, in that for all inputs in $\mathcal{U}_\rho$, all trajectories in $X$ will converge to the moving ellipsoid in finite time. We now aim to show that every pair of trajectories inside $\Omega_{\bar{c}}$ contract to each other. The first lemma shows that the ellipsoid is forward invariant.

\begin{lemma}\label{lemma:omega_invariance}
    Let Assumption~\ref{assumption:contraction_metric} hold and define $\ell_P = \norm{(1,T)}_P$. Let $u \in \mathcal{U}_\rho$ and let $x(\cdot)$ be a solution to~\eqref{eq:IIDM_follower} with $x(t) \in \Omega_{\bar{c}}(u(t))$ on some interval $[t_1,t_2)$. Define $\xi(t) = x(t) - \bar{x}(u(t))$ and $W_0(t) = \norm{\xi(t)}_P$. Then for almost every $t$ in $[t_1,t_2)$ $\dot{W}_0 \leq -\lambda W_0 + \ell_P \rho$. Additionally, if $\rho \leq \rho_{inv} \coloneqq \lambda\sqrt{\bar{c}}/\ell_P$, then the moving ellipsoid $\Omega_{\bar{c}}(u(t))$ is forward invariant.
\end{lemma}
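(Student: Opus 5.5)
The plan is to differentiate the $P$-norm of the tracking error along the solution, using the contraction certificate~\eqref{eq:certificate} for the $F$-part and the slow variation of $u$ for the drift of the equilibrium. Forward invariance then follows from a comparison argument at the level $W_0 = \sqrt{\bar{c}}$.

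\textbf{Step 1: write the error dynamics.} Since $u$ is locally absolutely continuous and $\bar{x}(u) = (u, s_0 + uT)$ is affine, $\xi(t) = x(t) - \bar{x}(u(t))$ is absolutely continuous. For almost every $t$ it satisfies
\begin{align*}
\dot{\xi} = F(x,u) - (1,T)^\top \dot{u}.
\end{align*}
We have $F(\bar{x}(u),u) = 0$, because $v = u$ and $s = \bar{s}(u)$ give $z=1$ and hence $A = a(1-1) = 0$. So we can write
\begin{align*}
F(x,u) = \int_0^1 J(\bar{x}(u) + \theta\xi, u)\,d\theta \; \xi .
\end{align*}
The segment from $\bar{x}(u)$ to $x$ lies in $\Omega_{\bar{c}}(u)$, since ellipsoids are convex. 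Hence every point $(\bar{x}(u)+\theta\xi, u)$ is in $\mathcal{W}_{\bar{c}}$, where Assumption~\ref{assumption:contraction_metric} applies.

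\textbf{Step 2: derive the differential inequality.} Let $V = W_0^2 = \xi^\top P \xi$. Then
\begin{align*}
\dot{V} = \int_0^1 \xi^\top S(\bar{x}+\theta\xi,u)\,\xi\, d\theta - 2\xi^\top P (1,T)^\top \dot{u}.
\end{align*}
By the certificate, the first term is at most $-2\lambda V$. By Cauchy--Schwarz in the $P$ inner product and $|\dot{u}| \le \rho$, the second term is at most $2 W_0 \ell_P \rho$. Where $W_0 > 0$, using $\dot{V} = 2W_0\dot{W}_0$ gives
\begin{align*}
\dot{W}_0 \le -\lambda W_0 + \ell_P\rho .
\end{align*}
Where $W_0 = 0$, $W_0$ is not differentiable in the usual sense. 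There I would argue with the upper Dini derivative, or work with $\sqrt{V+\eta}$ and let $\eta \to 0$. Either way one gets $D^+W_0 \le \ell_P\rho$, which is consistent with the claimed bound.

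A technical point is the regularity of $J$. The IIDM has piecewise branches and the clamp in~\eqref{eq:desired_gap}, so $J$ may fail to exist on lower-dimensional switching sets. I would either take $J$ as defined almost everywhere along segments, since $F$ is Lipschitz and the mean value formula holds for a.e. segment direction, or note that, as in Lemma~\ref{lemma:uniform_entry}, the relevant neighbourhood lies in a single smooth branch.

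\textbf{Step 3: prove forward invariance.} Suppose $\rho \le \rho_{inv}$ and $W_0(t_1) \le \sqrt{\bar{c}}$. Whenever $W_0 = \sqrt{\bar{c}}$, the bound gives
\begin{align*}
\dot{W}_0 \le -\lambda\sqrt{\bar{c}} + \ell_P\rho \le 0 .
\end{align*}
By a comparison lemma applied to the absolutely continuous function $W_0$, we get $W_0(t) \le \max(W_0(t_1), \ell_P\rho/\lambda) \le \sqrt{\bar{c}}$ on any interval on which the solution stays in the ellipsoid. The solution is continuous, so the maximal such interval is closed in time. It is also open: at the boundary $\dot{W}_0 \le 0$, and for strict inequality $\rho < \rho_{inv}$ one can push slightly inside, with the equality case handled by the comparison bound directly. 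Hence the interval extends to all $t \ge t_1$. The solution exists globally because $\Omega_{\bar{c}}(u) \subseteq X$ and $X$ is invariant (Proposition~\ref{prop:X_invariance}).

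\textbf{Main obstacle.} I expect the hardest part to be this last continuation step in the boundary case $\rho = \rho_{inv}$, together with the a.e.\ differentiability of $W_0$. The cleanest route is to use the integrated comparison estimate rather than pointwise tangency.
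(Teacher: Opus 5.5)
Your proposal is correct and follows the paper's proof essentially step for step. Both use $F(\bar{x}(u),u)=0$, the integral-Jacobian representation along the segment inside the convex ellipsoid, the certificate plus the $P$-inner-product Cauchy--Schwarz bound on $\frac{d}{dt}W_0^2$, a Dini-derivative treatment at $W_0=0$, and a boundary argument for forward invariance, where your continuation argument is more careful than the paper's one-line ``cannot cross outwards.'' On the obstacle you flagged: for $\rho=\rho_{inv}$, appealing to ``the comparison bound directly'' is circular, since that bound is only available while $x(t)\in\Omega_{\bar{c}}(u(t))$. It closes cleanly by comparing $F(x,u)$ with $F$ at the radial projection $\bar{x}(u)+(\sqrt{\bar{c}}/W_0)\,\xi$ onto the boundary, using a Lipschitz constant $L$ of $F$ on $X$ in the $P$-norm; this gives $\dot{W}_0\le L(W_0-\sqrt{\bar{c}})$ whenever $W_0\ge\sqrt{\bar{c}}$, so Gr\"onwall rules out exit for every $\rho\le\rho_{inv}$.
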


\begin{pf}
    First, since $\bar{x}(u)$ is the equilibrium for a frozen input $u$, $F(\bar{x}(u),u) = 0$ for all $u$. Additionally, since $\bar{x}$ is affine in $u$, we get $\frac{d}{dt} \bar{x}(u(t)) = (1,T)^\top \dot{u}(t)$. Taking the derivative $\xi$ along the solution we get
    \begin{align*}
        \dot{\xi} = F(x,u) - F(\bar{x}(u),u) - (1,T)^\top \dot{u}.
    \end{align*}
    Applying the fundamental theorem of calculus to the map $\theta \to F(\bar{x}(u) + \theta,u)$ gives $F(x,u) - F(\bar{x}(u),u) = \left(\int_0^1 J(x_\theta,u)d\theta \right)\xi$, where $x_\theta = \bar{x}(u) + \theta\xi$. By convexity of $\Omega_{\bar{c}}(u)$ the segment from $\bar{x}(u)$ to $x$ lies in the ellipsoid. Now, we differentiate $W_0^2$, giving
    \begin{align*}
        \frac{d}{dt} W_0^2 &= \xi^\top P \dot{\xi} + \dot{\xi}^\top P\xi \\
        &= \xi^\top\left(\int_0^1 [PJ(x_\theta,u) + J(x_\theta,u)^\top P]d\theta \right)\xi \\
        &\qquad - 2\xi^\top P(1,T)^\top \dot{u}.
    \end{align*}
    By Assumption~\ref{assumption:contraction_metric} $S(x_\theta,u) \leq -2\lambda P$ for each $\theta$, so the integral satisfies the bound
    \begin{align*}
        \xi^\top \left(\int_0^1 S(x_\theta,u)d\theta \right)\xi \leq -2\lambda\xi^\top P\xi = -2\lambda W_0^2.
    \end{align*}
    For the second term, applying the Cauchy-Schwarz inequality with the $P$-inner product gives
    \begin{align*}
        |\xi^\top P(1,T)^\top \dot{u}| \leq \norm{\xi}_P \norm{(1,T)^\top}_P |\dot{u}| \leq W_0 \ell_P \rho,
    \end{align*}
    since $|\dot{u}| \leq \rho$. Combining inequalities gives $\frac{d}{dt} W_0^2 \leq - 2\lambda W_0^2 + 2W_0 \ell_P \rho$. When $W_0 > 0$ dividing by $2W_0$ provides the claim. When $W_0(t) = 0$, since $\xi = 0$ the dynamics satisfies $\norm{\dot{\xi}}_P \leq \ell_P |\dot{u}| \leq \ell_P \rho$. As such, the Dini derivative, $D^+$ satisfies $D^+ W_0(t) \leq \ell_P \rho = -\lambda W_0(t) + \ell_P \rho$. Applying the comparison lemma provides the claim.

    Now we show forward invariance. Suppose that $x(t_1) \in \Omega_{\bar{c}}(u(t_1))$, so $W_0(t_1) \leq \sqrt{\bar{c}}$. Evaluating $\dot{W}_0 \leq -\lambda W_0 + \ell_P \rho$ on the boundary $W_0 = \sqrt{\bar{c}}$ we get
    \begin{align*}
        \dot{W}_0\bigg|_{W_0 = \sqrt{\bar{c}}} \leq -\lambda \sqrt{\bar{c}} + \ell_P \rho.
    \end{align*}
    If $\rho \leq \rho_{inv} = \lambda\sqrt{\bar{c}}/\ell_P$, then $\ell_P\rho \leq \lambda\sqrt{\bar{c}}$, so $\dot{W}_0|_{W_0 = \sqrt{\bar{c}}} \leq 0$.
Since the derivative is non-positive at the boundary, trajectories cannot cross outwards past $W_0 = \sqrt{\bar{c}}$. As such, for all $t \geq t_1$ we have $x(t) \in \Omega_{\bar{c}}(u(t))$, giving forward invariance.
\end{pf}

The final intermediate step is to prove that the trajectories inside $\Omega_{\bar{c}}$ contract to each other.

\begin{lemma}\label{lemma:omega_contraction}
    Suppose Assumption~\ref{assumption:contraction_metric} holds and define $g_P = \max_{\mathcal{W}_{\bar{c}}} \norm{G(x,u)}_P$, where $G(x,u) = \partial F/\partial u$. Let $u_1,u_2 \in \mathcal{U}_{\rho}$ and let $x_1,x_2$ be solutions satisfying $x_i(t) \in \Omega_{\bar{c}}(u_i(t))$ for all $t \geq T_0$, $i = 1,2$.
    % \begin{align*}
    %     x_i(t) \in \Omega_{\bar{c}}(u_i(t)) \qquad \forall t \geq T_0,~i = 1,2.
    % \end{align*}
    Then, for all $t \geq T_0$
    \begin{align*}
        \norm{x_1(t)-x_2(t)}_P &\leq e^{-\lambda(t-T_0)}\norm{x_1(T_0) - x_2(T_0)}_P \\
        &+ g_P\int_{T_0}^{t}e^{-\lambda(t-s)} |u_1(s) - u_2(s)|ds.
    \end{align*}
    In particular, the integral term is bounded above by $\frac{g_P}{\lambda}\sup_{T_0 \leq \tau \leq t}|u_1(\tau) - u_2(\tau)|$. As such, for a shared input $u_1 = u_2$, the trajectories contract exponentially. If $u_1(t) - u_2(t) \to 0$, then $x_1(t) - x_2(t) \to 0$.
\end{lemma}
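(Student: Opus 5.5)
The plan is a standard incremental (contraction) argument in the metric $P$, run along straight-line interpolations between the two state--input pairs, with $\norm{x_1-x_2}_P$ estimated by a scalar comparison inequality.

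\textbf{Step 1: the interpolation stays in the certified set.} Set $\delta(t)=x_1(t)-x_2(t)$. For $\theta\in[0,1]$ let $x_\theta=x_2+\theta\delta$ and $u_\theta=u_2+\theta(u_1-u_2)$. The first thing to check is that $(x_\theta,u_\theta)\in\mathcal{W}_{\bar{c}}$, so that Assumption~\ref{assumption:contraction_metric} applies along the whole segment. This holds because $\mathcal{W}_{\bar{c}}$ is convex. Indeed, $(x,u)\mapsto \norm{x-\bar{x}(u)}_P^2$ is a convex quadratic composed with the affine map $(x,u)\mapsto x-(u,s_0+uT)$, so its $\bar{c}$-sublevel set is convex. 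Intersecting with the convex slab $u\in[u_{\min},u_{\max}]$ preserves convexity.

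\textbf{Step 2: differential inequality for $\norm{\delta}_P$.} For almost every $t\ge T_0$ (the inputs are only locally absolutely continuous), the fundamental theorem of calculus gives
\begin{align*}
\dot\delta = F(x_1,u_1)-F(x_2,u_2) = \bar J(t)\,\delta + \bar G(t)\,(u_1-u_2),
\end{align*}
where $\bar J=\int_0^1 J(x_\theta,u_\theta)\,d\theta$ and $\bar G=\int_0^1 G(x_\theta,u_\theta)\,d\theta$. Differentiating $\norm{\delta}_P^2$ and using $S\preceq -2\lambda P$ pointwise in $\theta$ (exactly as in the proof of Lemma~\ref{lemma:omega_invariance}) gives
\begin{align*}
\tfrac{d}{dt}\norm{\delta}_P^2 \le -2\lambda\norm{\delta}_P^2 + 2\norm{\delta}_P\, g_P\,|u_1-u_2|.
\end{align*}
Here the cross term is handled by Cauchy--Schwarz in the $P$-inner product, together with $\norm{\bar G}_P\le g_P$ by convexity of the norm.

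\textbf{Step 3: pass to $\norm{\delta}_P$ and integrate.} Where $\norm{\delta}_P>0$, divide by $2\norm{\delta}_P$ to get
\begin{align*}
\tfrac{d}{dt}\norm{\delta}_P \le -\lambda\norm{\delta}_P + g_P|u_1-u_2|.
\end{align*}
Where $\delta=0$, use the upper Dini derivative together with $\norm{\dot\delta}_P\le g_P|u_1-u_2|$, as in Lemma~\ref{lemma:omega_invariance}. The comparison lemma applied to the linear scalar ODE then yields the stated variation-of-constants bound.

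\textbf{Step 4: consequences.} The sup bound follows from $\int_{T_0}^t e^{-\lambda(t-s)}\,ds\le 1/\lambda$. With a shared input the integral term vanishes, which gives exponential contraction. For $u_1-u_2\to0$, fix $\eta>0$ and choose $t_\eta$ with $|u_1-u_2|\le\eta$ after $t_\eta$. Split the integral at $t_\eta$: the early part is bounded by $e^{-\lambda(t-t_\eta)}\,(\text{const})\to0$, since the inputs are bounded, and the late part is at most $g_P\eta/\lambda$. Letting $\eta\to0$ finishes the argument.

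\textbf{Main obstacle.} I expect the delicate step to be the regularity used in Step 2. $F$ is only piecewise smooth: it has IIDM branch switches and the $\max$ clamp in $s^*$. So $J$ and $G$ may fail to exist on measure-zero surfaces that the segment might cross, or that a trajectory might run along. The first approach I would try is to note that $\Omega_{\bar{c}}(u)$ can be taken inside the active-braking unsaturated region, where $F$ is $C^1$, consistent with the choice of $\epsilon_0$ before Lemma~\ref{lemma:uniform_entry}. If that is not available, the fallback is to interpret Assumption~\ref{assumption:contraction_metric} for a.e.\ $\theta$, or for generalized (Clarke) Jacobians. Because $F$ is Lipschitz on $X\times[u_{\min},u_{\max}]$, the mean-value integral representation remains valid for almost every segment, which suffices for the comparison argument.
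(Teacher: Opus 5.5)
Your proposal is correct and follows essentially the same route as the paper: interpolate along $(x_\theta,u_\theta)$ and check that it stays in $\mathcal{W}_{\bar c}$ by convexity. The paper writes $x_\theta=\bar x(u_\theta)+\theta\xi_1+(1-\theta)\xi_2$, which is your joint-convexity observation made explicit. Then use the integral mean-value form with $S\preceq-2\lambda P$ and Cauchy--Schwarz in the $P$-inner product, and close with the comparison lemma.

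Your Dini-derivative treatment of $\delta=0$ is a minor refinement. So is splitting the already-proved integral bound at $t_\eta$, which needs only boundedness of $|u_1-u_2|$; the paper instead re-applies the comparison lemma from $T_\epsilon$ and bounds $\norm{x_1(T_\epsilon)-x_2(T_\epsilon)}$ by $2\sqrt{\bar c}$. The paper handles your regularity concern as in your first option, by simply using that $F$ is $C^1$ on $\Omega_{\bar c}$. This is justified near $\mathcal{E}$ even though the equilibrium lies on $z=1$, because the two IIDM branches agree to first order there.
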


\begin{pf}
     We denote $\delta x(t) = x_1(t) - x_2(t)$, $\delta u(t) = u_1(t) - u_2(t)$, and $W(t) = \norm{\delta x(t)}_P$. Let $x_\theta(t) = \theta x_1(t) + (1-\theta)x_2(t)$ and $u_\theta(t) = \theta u_1(t) + (1-\theta)u_2(t)$ for $\theta \in [0,1]$. Write $x_i = \bar{x}(u_i) + \xi_i$, where $\xi_i^\top P \xi_i \leq \bar{c}$. Since $\bar{x}$ is affine, we can write $\theta\bar{x}(u_1) + (1-\theta)\bar{x}(u_2) = \bar{x}(u_\theta)$. Additionally, note that $\theta\xi_1 + (1-\theta)\xi_2$ lies in the ellipsoid of radius $\sqrt{\bar{c}}$ by the convexity of the set. As such,
     \begin{align*}
         x_\theta = \bar{x}(u_\theta) + \theta \xi_1 + (1- \theta)\xi_2 \in \Omega_{\bar{c}}(u_\theta).
     \end{align*}
     Therefore $(x_\theta,u_\theta) \in \mathcal{W}_{\bar{c}}$ for all $\theta \in [0,1]$. Since $F$ is $C^1$ in $\Omega_{\bar{c}}$, and as such along the segment, we apply the fundamental theorem of calculus to get $F(x_1,u_1) - F(x_2,u_2) = \int_0^1 \frac{d}{d\theta} F(x_\theta,u_\theta)d\theta = \left(\int_0^1 J(x_\theta,u_\theta)d\theta \right)\delta x +\left(\int_0^1 G(x_\theta,u_\theta)d\theta \right)\delta u,$
     % in which $J(x,u) = \partial F/\partial x$ and $B(x,u) = \partial F/\partial u$. 
     Since $\dot{\delta x} = \dot{x}_1 - \dot{x}_2$, it is equal to the above sum of integrals. Now, taking the derivative of $W$, we get
     \begin{align*}
         \frac{d}{dt}W^2 &= \delta x^\top \left(\int_0^1 S(x_\theta,u_\theta)d\theta\right)\delta x + 2\delta x^\top P\left(\int_0^1 Gd\theta\right)\delta u \\
         &\leq -2\lambda W^2 + 2Wg_P |\delta u|.
     \end{align*}
     Therefore $\dot{W} \leq -\lambda W + g_P|\delta u|$ when $W > 0$. A direct application of the comparison lemma in its integral form yields the first result, and the supremum bound follows since $\int_{T_0}^{t} e^{-\lambda(t-s)}ds \leq 1/\lambda$. For the second claim, suppose $u_1(t) - u_2(t) \to 0$. For arbitrary $\epsilon > 0$ select $T_\epsilon \geq T_0$ such that for all $t \geq T_\epsilon$ $|u_1(t) - u_2(t)| \leq \epsilon$. Applying the comparison lemma from $T_\epsilon$ gives
     \begin{align*}
         \norm{x_1(t) - x_2(t)}_P \leq e^{-\lambda(t-T_{\epsilon})}\norm{x_1(T_{\epsilon}) - x_2(T_{\epsilon})}_P + \frac{g_P\epsilon}{\lambda}.
     \end{align*}
     As both trajectories lie in $\Omega_{\bar{c}}$ for all $t \geq T_0$, we have that $\norm{x_1(T_{\epsilon}) - x_2(T_{\epsilon})} \leq 2\sqrt{\bar{c}}$. Therefore, $\limsup_{t \to \infty} \norm{x_1(t) - x_2(t)}_P \leq g_P \epsilon/\lambda$. Since this holds for all $\epsilon > 0$, $x_1(t) - x_2(t) \to 0$.
\end{pf}
Using these intermediate results we are able to show that the single-follower IIDM satisfies the ESP on $X \times \mathcal{U}_\rho$.
\begin{theorem}\label{thrm:ESP_slow}
    Consider the single-follower IIDM dynamics~\eqref{eq:IIDM_follower}. Suppose that the gap-increasing condition and Assumption~\ref{assumption:contraction_metric} hold, $\rho \leq \rho^* = \min\{\rho_{inv},\rho_{acq}\}$, and $x(0) \in X$. Then, this dynamics satisfies the ESP over the set $X \times \mathcal{U}_\rho$.
\end{theorem}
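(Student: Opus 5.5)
The plan is to chain the three intermediate results and then build the function $\delta(t)$ demanded by Definition~\ref{def:echo-state-property} explicitly.

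Fix an input $u \in \mathcal{U}_\rho$ with $\rho \leq \rho^*$, and let $x_1, x_2$ be two compatible solutions with $x_1(0), x_2(0) \in X$. The argument has three steps.

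First, by Proposition~\ref{prop:X_invariance}, both trajectories remain in the compact set $X$ for all $t \geq 0$. This lets the Lipschitz constants $L, b_F$ be used along the whole trajectory.

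Second, since $\rho \leq \rho_{acq}$, Proposition~\ref{prop:finite_entry_time} gives $x_i(T_{acq}) \in \Omega_{\bar{c}}(u(T_{acq}))$ for $i=1,2$. The time $T_{acq} = T(r/2)$ depends only on the constants of the problem, not on the initial conditions or on the particular input. Since also $\rho \leq \rho_{inv}$, Lemma~\ref{lemma:omega_invariance} shows the moving ellipsoid is forward invariant. Hence $x_i(t) \in \Omega_{\bar{c}}(u(t))$ for all $t \geq T_{acq}$.

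Third, apply Lemma~\ref{lemma:omega_contraction} with $u_1 = u_2 = u$ and $T_0 = T_{acq}$. The input term vanishes, and we get
\begin{align*}
\norm{x_1(t)-x_2(t)}_P \leq e^{-\lambda(t-T_{acq})}\norm{x_1(T_{acq})-x_2(T_{acq})}_P \quad \text{for } t \geq T_{acq}.
\end{align*}

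It remains to assemble a single function $\delta$, valid for all inputs and all compatible pairs. Let $d_X = \operatorname{diam}(X)$, which is finite since $X$ is compact. On $[0,T_{acq}]$ set $\delta(t) = d_X$. For $t \geq T_{acq}$, use that both points lie in the same ellipsoid $\Omega_{\bar{c}}(u(T_{acq}))$, so their $P$-distance is at most $2\sqrt{\bar{c}}$. Converting norms with $\lambda_{\min}(P)$ then gives
\begin{align*}
\delta(t) = \frac{2\sqrt{\bar{c}}}{\sqrt{\lambda_{\min}(P)}}\, e^{-\lambda(t-T_{acq})}.
\end{align*}
One could instead take the maximum with $d_X$ to keep $\delta$ monotone. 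Either way, $\delta(t) \to 0$ and $\delta$ is independent of the input and of the trajectories, as Definition~\ref{def:echo-state-property} requires.

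Two technical points need to be addressed along the way. The first is compactness of the input set. Definition~\ref{def:echo-state-property} is phrased for compact $C$, whereas here the input values lie in the compact interval $[u_{\min},u_{\max}]$ and the class $\mathcal{U}_\rho$ restricts their derivatives. The ESP will therefore be interpreted over $X \times \mathcal{U}_\rho$, as the statement says. The second is that the nonsmooth final case of~\eqref{eq:IIDM} does not interfere. It only applies when $s<s_0$, which is excluded by the gap-increasing condition and Proposition~\ref{prop:X_invariance}.

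The main obstacle is uniformity. We must check that $T_{acq}$ and the contraction rate $\lambda$ do not depend on the specific $u$ or on the initial states. For $T_{acq}$ this comes from Lemma~\ref{lemma:uniform_entry}, whose bound is uniform over $X \times [u_{\min},u_{\max}]$. The remaining care is that Lemma~\ref{lemma:omega_contraction} requires membership in $\Omega_{\bar{c}}$ for all $t \geq T_0$, and this is exactly what the forward invariance from Lemma~\ref{lemma:omega_invariance} supplies.
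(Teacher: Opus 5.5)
Your proposal is correct and follows the same chain as the paper's proof: entry into the moving ellipsoid via Proposition~\ref{prop:finite_entry_time}, forward invariance via Lemma~\ref{lemma:omega_invariance}, and exponential contraction via Lemma~\ref{lemma:omega_contraction} with a shared input. Your explicit construction of a uniform $\delta(t)$, equal to $\operatorname{diam}(X)$ before $T_{acq}$ and to $2\sqrt{\bar{c}}\,e^{-\lambda(t-T_{acq})}/\sqrt{\lambda_{\min}(P)}$ afterwards, is a useful addition that the paper leaves implicit.
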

\begin{pf}
    Let $x_1$ and $x_2$ be solutions of~\eqref{eq:IIDM_follower} driven by input $u \in \mathcal{U}_\rho$. By Proposition~\ref{prop:finite_entry_time} there exists a time $T$ such that both solutions satisfy $x_i(T) \in \Omega_{\bar{c}}(u(T))$. Then, by Lemma~\ref{lemma:omega_invariance} both solutions stay in $\Omega_{\bar{c}}(u(t))$ for all $t \geq T$. Since both solutions are in $\Omega_{\bar{c}}(u(t))$, Lemma~\ref{lemma:omega_contraction} gives
    \begin{align*}
        \norm{x_1(t)-x_2(t)}_P \leq e^{-\lambda(t-T)}\norm{x_1(T)-x_2(T)}_P
    \end{align*}
    for all $t \geq T$. As such the solutions converge to each other exponentially, satisfying Definition~\ref{def:echo-state-property}.
\end{pf}
As such the single follower IIDM satisfies the ESP. Following the construction of the metric certifying Assumption~\ref{assumption:contraction_metric} (see Appendix B), we get that this can hold for $\rho \approx 0.02$, which limits us to signals that vary significantly slower than actual vehicles. However, in Section~\ref{sec:simulations} we illustrate successful RC prediction for signals with larger derivatives, suggesting that the ESP holds for a more general class of input signals. Now that we have shown that the single follower IIDM vehicle satisfies the ESP, we want to show that this extends to a platoon of $N$ vehicles, which is given by the following result. The inductive argument is related to string stability analysis for vehicle platoons~\citep{DS-JKH:96,SF-YZ-SEL-ZC-HXL-LL:19}, in that bounds are propagated down the cascade of vehicles.

\begin{proposition}\label{prop:platoon_ESP}
    Consider an $N$-vehicle platoon in which each following vehicle is defined by the dynamics~\eqref{eq:IIDM_follower}. Define the rate-propagation gain by
    \begin{align*}
    \Theta = \frac{L_{A}\ell_P}{\lambda \sqrt{\lambda_{\min}(P)}} = L_A \kappa,
    \end{align*}
    where $L_{A}$ is the Lipschitz constant of $A(\cdot,\cdot,\cdot)$ with respect to $x$, and $\lambda,P$ are as defined in Assumption~\ref{assumption:contraction_metric}. For $j \in \{0,\dots,N-1\}$ let $\rho^{(j)} = \Theta^j\rho$, and suppose that $\hat{\rho} \coloneqq \rho \max\{1,\Theta,\dots,\Theta^{N-1}\} < \rho^*$. Then any two solutions for the platoon, $x(t), \tilde{x}(t)$, driven by the same input $u_0 \in \mathcal{U}_\rho$ satisfying $u_0(t) \in [u_{\min}+N\kappa\hat{\rho}, u_{\max}-N\kappa\hat{\rho}]$, converge to each other, that is $\norm{x^{(i)}(t) - \tilde{x}^{(i)}(t)} \to 0$ for every $i \in \{1,\dots,N\}$. Further, this convergence is uniform, so the platoon satisfies the ESP.
\end{proposition}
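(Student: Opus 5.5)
The proof is an induction down the cascade. The key fact is that vehicle $i$ is a single IIDM follower driven by $u_i(t) = v^{(i-1)}(t)$, with $u_1 = u_0$. The single-follower results (Proposition~\ref{prop:finite_entry_time}, Lemmas~\ref{lemma:omega_invariance} and~\ref{lemma:omega_contraction}, Theorem~\ref{thrm:ESP_slow}) apply to vehicle $i$ once we show that its input eventually lies in $[u_{\min},u_{\max}]$, is locally absolutely continuous, and has derivative bounded by some rate at most $\rho^*$. The induction hypothesis at stage $i$ is therefore: after some finite time $T_{i-1}$, the input $u_i$ is Lipschitz with $|\dot u_i| \le \rho^{(i-1)} = \Theta^{i-1}\rho$, lies in $[u_{\min}+(N-i+1)\kappa\hat\rho,\,u_{\max}-(N-i+1)\kappa\hat\rho]$, and the two copies satisfy $u_i(t)-\tilde u_i(t)\to 0$. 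The base case $i=1$ is exactly the hypothesis on $u_0$, since $\tilde u_1 = u_1 = u_0$.

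For the inductive step, first use $\rho^{(i-1)}\le\hat\rho<\rho^*$ together with Proposition~\ref{prop:finite_entry_time} (restarted at time $T_{i-1}$, where $x^{(i)}(T_{i-1})\in X$ by Proposition~\ref{prop:X_invariance}) and Lemma~\ref{lemma:omega_invariance}. This puts both $x^{(i)}$ and $\tilde x^{(i)}$ inside $\Omega_{\bar c}(u_i(t))$ and $\Omega_{\bar c}(\tilde u_i(t))$ respectively from some finite time onward. Lemma~\ref{lemma:omega_contraction} then gives $x^{(i)}-\tilde x^{(i)}\to 0$, because $u_i-\tilde u_i\to 0$. In particular $v^{(i)}-\tilde v^{(i)}\to0$, which is the convergence part of the hypothesis for stage $i+1$.

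Next we propagate the rate and range. From Lemma~\ref{lemma:omega_invariance}, $\dot W_0\le -\lambda W_0+\ell_P\rho^{(i-1)}$, so after a further transient $W_0(t)\le \ell_P\rho^{(i-1)}/\lambda + \epsilon$. Hence $\norm{\xi}\le W_0/\sqrt{\lambda_{\min}(P)}\le \kappa\rho^{(i-1)}+\epsilon'$. Two consequences follow:
\begin{itemize}
\item[(a)] Since $v^{(i)} = u_i - w$ with $|w|\le\norm{\xi}$, the velocity $v^{(i)}$ stays within $\kappa\hat\rho$ (plus slack) of $u_i$. This keeps $u_{i+1}$ inside the shrunken interval $[u_{\min}+(N-i)\kappa\hat\rho,\,u_{\max}-(N-i)\kappa\hat\rho]$. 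The margin $N\kappa\hat\rho$ in the statement is what absorbs these $N$ successive shifts, and the slack can be handled by making the inequalities non-strict or by choosing $\epsilon$ small relative to the strict gap $\hat\rho<\rho^*$.
\item[(b)] Since $\dot v^{(i)} = A(x^{(i)},u_i) = A(x^{(i)},u_i)-A(\bar x(u_i),u_i)$, we get $|\dot v^{(i)}|\le L_A\norm{\xi}\le L_A\kappa\rho^{(i-1)} = \Theta\rho^{(i-1)} = \rho^{(i)}$ (up to $\epsilon$). So $u_{i+1}$ is Lipschitz, hence absolutely continuous, with the required rate.
\end{itemize}

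The main obstacle is this rate/range bookkeeping. The bound $W_0\le\ell_P\rho/\lambda$ holds only asymptotically, so it has to be either exact or absorbed into the strict inequality $\hat\rho<\rho^*$. In addition, the inherited input belongs to $\mathcal U_{\rho^{(i)}}$ only after a transient, while Proposition~\ref{prop:finite_entry_time} was stated from time $0$. I would handle the latter by time-shifting, which is legitimate because all estimates depend only on $X$ and the rate bound. A cleaner route to an exact rate bound is to start inside the ellipsoid sublevel $W_0\le\ell_P\rho^{(i-1)}/\lambda$, which is itself forward invariant by the same boundary argument as in Lemma~\ref{lemma:omega_invariance}, and to reach it by a uniform-time argument as in Lemma~\ref{lemma:uniform_entry}.

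Uniformity comes from the fact that every transient time above depends only on $X$, $P$, $\lambda$, $\bar c$ and $\hat\rho$, not on the initial conditions. On each vehicle the contraction bound of Lemma~\ref{lemma:omega_contraction} is then a uniform function of $t$, and summing over $i$ gives a $\delta(t)\to0$ as required by Definition~\ref{def:echo-state-property}.
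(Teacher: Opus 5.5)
Your proposal follows the paper's proof closely. It uses the same induction down the cascade and the same rate propagation $|\dot v^{(i)}| = |A(x^{(i)},u_i)-A(\bar x(u_i),u_i)|\le L_A\norm{\xi}\le\Theta\rho^{(i-1)}$, up to an $\eta$-slack absorbed by $\hat\rho<\rho^*$. It also uses the same range shift via $|w|\le\norm{\xi}$, and Lemma~\ref{lemma:omega_contraction} for the pairwise contraction. The step that does not go through as written is uniformity for vehicles $i\geq 2$. Your induction hypothesis only carries $u_i-\tilde u_i\to 0$, and you conclude $x^{(i)}-\tilde x^{(i)}\to 0$ from the limit clause of Lemma~\ref{lemma:omega_contraction}. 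That clause uses a time $T_\epsilon$ after which $|u_i-\tilde u_i|\le\epsilon$, and under your hypothesis $T_\epsilon$ depends on the particular pair of solutions. It is not one of the acquisition or rate-settling times listed in your last paragraph. So ``every transient time depends only on $X,P,\lambda,\bar c,\hat\rho$'' does not make the resulting $\delta(t)$ pair-independent, which Definition~\ref{def:echo-state-property} requires.

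The paper handles this by making the hypothesis quantitative. Its item (iii) carries the pair-independent envelope $\norm{x^{(i)}-\tilde x^{(i)}}_P\le C_i\Pi_i(t)$, a degree-$(i-1)$ polynomial times $e^{-\lambda(t-T^{(i)})}$, and propagates it through the \emph{integral} form of Lemma~\ref{lemma:omega_contraction}. You can repair your version the same way, or with less computation by carrying any envelope. Suppose $|u_i-\tilde u_i|\le\delta_{i-1}(t)$ for all pairs, with $\delta_{i-1}$ bounded and $\delta_{i-1}(t)\to 0$, and let $D_P$ be the diameter of $X$ in $\norm{\cdot}_P$. Then
\[ \norm{x^{(i)}(t)-\tilde x^{(i)}(t)}_P\le D_P e^{-\lambda(t-T^{(i)})}+g_P\int_{T^{(i)}}^{t}e^{-\lambda(t-s)}\delta_{i-1}(s)\,ds . \]
Call the right-hand side $\delta_i(t)$. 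It tends to $0$ by the same $\epsilon$-splitting as in the lemma, but now the splitting time depends only on $\delta_{i-1}$. The bound $|u_{i+1}-\tilde u_{i+1}|\le\delta_i(t)/\sqrt{\lambda_{\min}(P)}$ then continues the induction.

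Two smaller points. First, your ``cleaner route'' of entering $\{W_0\le\ell_P\rho^{(i-1)}/\lambda\}$ in uniform time is not justified. The comparison bound from Lemma~\ref{lemma:omega_invariance} only approaches that level asymptotically, so stay with the $\eta$-slack route, as the paper does. Second, the slack in the range bookkeeping is absorbed by the spare margin, not by the gap $\hat\rho<\rho^*$. At least $\kappa\hat\rho$ of margin remains for the input to vehicle $N$, which is why the paper takes $\eta<\kappa\hat\rho$.
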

\begin{pf}
        We proceed by induction on the vehicle index. For $j \in \{1,\dots,N\}$ let $\Pi_j(t) = (1 + (t-T^{(j)})^{j-1})e^{-\lambda(t-T^{(j)})}$, where the uniform times $T^{(j)}$ are constructed below. We show that for each $i \in \{1,\dots,N\}$: (i) for every $\eta > 0$, eventually $|\dot{v}^{(i)}(t)| \leq \rho^{(i)} + \eta$; (ii) for every $\eta > 0$, eventually $v^{(i)}(t) \in [u_{\min}+(N-i)\kappa\hat{\rho} - \eta,\, u_{\max}-(N-i)\kappa\hat{\rho} + \eta]$; and (iii) there exists $C_i > 0$, independent of the pair of solutions, such that $\norm{x^{(i)}(t) - \tilde{x}^{(i)}(t)}_P \leq C_i\Pi_i(t)$ for all $t \geq T^{(i)}$. Interpreting $v^{(0)} = u_0$, properties (i) and (ii) hold at index $0$ by assumption, with rate $\rho^{(0)} = \rho$ and margin $N\kappa\hat{\rho}$. By (i) and (ii) at index $i-1$, the input $u^{(i-1)} \coloneqq v^{(i-1)}$ to vehicle $i$ eventually takes values in $[u_{\min}+\kappa\hat{\rho}-\eta, u_{\max}-\kappa\hat{\rho}+\eta] \subset [u_{\min},u_{\max}]$ (taking $\eta < \kappa\hat{\rho}$) with rate bound $\rho^{(i-1)} + \eta \leq \hat{\rho} + \eta < \rho^*$ for $\eta$ sufficiently small. Therefore Proposition~\ref{prop:finite_entry_time} and Lemma~\ref{lemma:omega_invariance} apply to vehicle $i$, so there exists a time $T^{(i)}$ after which $x^{(i)}(t) \in \Omega_{\bar{c}}(v^{(i-1)}(t))$ and $\tilde{x}^{(i)}(t) \in \Omega_{\bar{c}}(\tilde{v}^{(i-1)}(t))$.

        We first propagate the rate bound (i). Since $A$ vanishes at the equilibrium $\bar{x}(u^{(i-1)})$ of the input $u^{(i-1)} = v^{(i-1)}$, Lipschitzness of $A$ gives
        \begin{align}\label{eq:induction_1}
            |\dot{v}^{(i)}(t)| = |A(x^{(i)},u^{(i-1)})| &\leq L_A \norm{x^{(i)} - \bar{x}(u^{(i-1)})} \notag \\
            &\leq \frac{L_A}{\sqrt{\lambda_{\min}(P)}}W_0(t),
        \end{align}
        with $W_0(t) = \norm{x^{(i)} - \bar{x}(v^{(i-1)}(t))}_P$. Applying Lemma~\ref{lemma:omega_invariance} along with the comparison lemma with the input rate $\rho^{(i-1)} + \eta$ gives
        \begin{align}\label{eq:induction_2}
            W_0(t) \leq W_0(t_1)e^{-\lambda(t - t_1)} + \frac{\ell_P(\rho^{(i-1)}+\eta)}{\lambda}.
        \end{align}
        Substituting~\eqref{eq:induction_2} into~\eqref{eq:induction_1} gives 
        \begin{align*}
            |\dot{v}^{(i)}(t)| \leq \frac{L_A W_0(t_1)}{\sqrt{\lambda_{\min}(P)}}e^{-\lambda(t-t_1)} + \Theta (\rho^{(i-1)}+\eta),
        \end{align*}
        and since $W_0(t_1) \leq \bar{W} \coloneqq \max_{X \times [u_{\min},u_{\max}]}\norm{x - \bar{x}(u)}_P$, which is finite, for every $\eta' > 0$ it holds that $|\dot{v}^{(i)}(t)| \leq \Theta\rho^{(i-1)} + \eta' = \rho^{(i)} + \eta'$ within a uniform time, establishing (i) at index $i$.

        For the range bound (ii), converting~\eqref{eq:induction_2} into the Euclidean norm gives that, eventually, $|v^{(i)}(t) - v^{(i-1)}(t)| \leq \kappa(\rho^{(i-1)}+\eta) + \eta \leq \kappa\hat{\rho} + \eta'$ for every $\eta' > 0$. Combined with (ii) at index $i-1$, this establishes (ii) at index $i$.

        Finally we establish the contraction bound (iii). For $i = 1$ the two copies share the input $u_0$, so applying Lemma~\ref{lemma:omega_contraction} from $T^{(1)}$ gives $\norm{x^{(1)}(t) - \tilde{x}^{(1)}(t)}_P \leq 2\sqrt{\bar{c}}e^{-\lambda(t-T^{(1)})} \leq C_1 \Pi_1(t)$ with $C_1 = \sqrt{\bar{c}}$. For $i \geq 2$, property (iii) at index $i-1$ gives, for $s \geq T^{(i-1)}$,
        \begin{align*}
            |v^{(i-1)}(s) - \tilde{v}^{(i-1)}(s)| \leq \frac{C_{i-1}}{\sqrt{\lambda_{\min}(P)}}\Pi_{i-1}(s).
        \end{align*}
        Applying Lemma~\ref{lemma:omega_contraction} from $T^{(i)}$ with this input difference gives, for $t \geq T^{(i)}$,
        \begin{align*}
            \norm{x^{(i)}(t) - \tilde{x}^{(i)}(t)}_P &\leq 2\sqrt{\bar{c}}e^{-\lambda(t-T^{(i)})} \\
            &+ \frac{g_P C_{i-1}}{\sqrt{\lambda_{\min}(P)}}\int_{T^{(i)}}^{t}\!e^{-\lambda(t-s)}\Pi_{i-1}(s)ds.
        \end{align*}
        A direct computation bounds the integral by a polynomial of degree $i-1$ in $(t-T^{(i)})$ multiplied by $e^{-\lambda(t-T^{(i)})}$, so there exists $C_i$, depending only on $\bar{c}$, $g_P$, $\lambda$, $\lambda_{\min}(P)$, and $C_{i-1}$, such that $\norm{x^{(i)}(t) - \tilde{x}^{(i)}(t)}_P \leq C_i\Pi_i(t)$ for all $t \geq T^{(i)}$, establishing (iii). In particular $v^{(i)}(t) - \tilde{v}^{(i)}(t) \to 0$, closing the induction.

        The times $T^{(i)}$ are uniform over pairs of solutions, being built from the uniform acquisition time of Proposition~\ref{prop:finite_entry_time} and the uniform rate-settling times above. Letting $D_X$ denote the diameter of $X$ and defining
        \begin{align*}
            \delta_i(t) = \begin{cases}
                D_X & t < T^{(i)} \\
                \frac{C_i}{\sqrt{\lambda_{\min}(P)}}\Pi_i(t) & t \geq T^{(i)},
            \end{cases}
        \end{align*}
        we have $\norm{x^{(i)}(t) - \tilde{x}^{(i)}(t)} \leq \delta_i(t)$ for all $t \geq 0$, and $\delta_i(t) \to 0$ as $t \to \infty$ since the polynomial factor is dominated by the exponential decay. Therefore $\norm{x(t) - \tilde{x}(t)} \leq \delta(t) \coloneqq \sum_{i=1}^N \delta_i(t)$ for every pair of solutions, with $\delta(t) \to 0$, satisfying Definition~\ref{def:echo-state-property} and showing that the platoon satisfies the ESP.
\end{pf}

Proposition~\ref{prop:platoon_ESP} illustrates the ESP for a reservoir formed by a platoon of $N$-vehicles when the input is the actual velocity of the lead vehicle. However, a more realistic input corresponds with using the desired velocity of the lead vehicle, $v_0^{(1)}$. The argument used in Proposition~\ref{prop:platoon_ESP} can be used to show that this method also works, provided that the lead velocities from two initial conditions converge to each other. In Lemma~\ref{lemma:e1_decay} we provide a proof showing that this holds, which allows for the use of the desired velocity as an input.

%--------------------------------------------------------------------
\section{Simulations}\label{sec:simulations}
%--------------------------------------------------------------------
In the prior section we illustrated that the IIDM dynamics satisfies the ESP for a class of slowly-varying input signals, and as such IIDM-RC is theoretically valid for those signals. In this section we provide numerical simulations for more physically realistic inputs, illustrating that IIDM-RC can be used in physically realistic cases. We first show that velocities and gaps of a simulated traffic network can be predicted to a high degree of accuracy using IIDM-RC, which can then be extrapolated to position prediction. For these simulations we use multiple traffic models for the simulated network being predicted. This is to ensure that success is not dependent only on using IIDM dynamics for both the target and reservoir. Following this we compare the use of IIDM-RC for prediction with using both an LSTM network and an ESN.

The task considered in the simulations is as follows. We consider a network of $5$ vehicles\footnote{Detailed network parameter values used in this section are provided in the Appendix D.}, governed by a car-following dynamics and traveling on a single-lane circular road. The road is assumed to be large enough such that when combined with the velocity limits on the vehicles, the lead vehicle will not interact with the trailing vehicle of the network. In order to allow for heterogeneity in the network values are selected from a uniform random variable centered around a selected value. This network is given an input signal as the desired velocity of the lead vehicle, resulting in a non-constant system trajectory. We assume that this network is undersensed, and we are able to determine the velocity, or associated gap, of one or a small number of vehicles. The goal is then to predict the gaps and/or velocities of all of the vehicles in the network. Based on the gap or velocity predictions, the position of the vehicles relative to their starting location can be determined.

\subsection{Prediction with IIDM-RC}
In this section we show the use of IIDM-RC for predicting the gaps and velocities of the target vehicle network. We use the gap of the third vehicle, $s_3$, as the input to the reservoir, after mapping it into a desired velocity by the function $u(t) = \frac{s_3}{4} + 20$. The input signal to the target network for these simulations is $v_{0}^{(1)}(t) = 4\cos(\frac{2t}{15}) + 6\sin(\frac{2t}{12}) + 25$ m/s, which has a range between $[15,35]$ m/s and a maximum derivative of $1.53$ m/s$^2$. 

Figure~\ref{fig:rc_predictions} shows the velocity and gap predictions for the second and fourth vehicles of a target network governed by the IIDM dynamics. In both cases a brief transient settles to an accurate prediction within the first thirty seconds. The transient stems from the initial conditions of the target network rather than the reservoir: training uses a washout period, so the trained attractor reflects behavior near the equilibrium trajectory of the input, and predictions started far from this trajectory show a larger transient.

\begin{figure}[ht]
\centering
  \includegraphics[width = 0.85\linewidth]{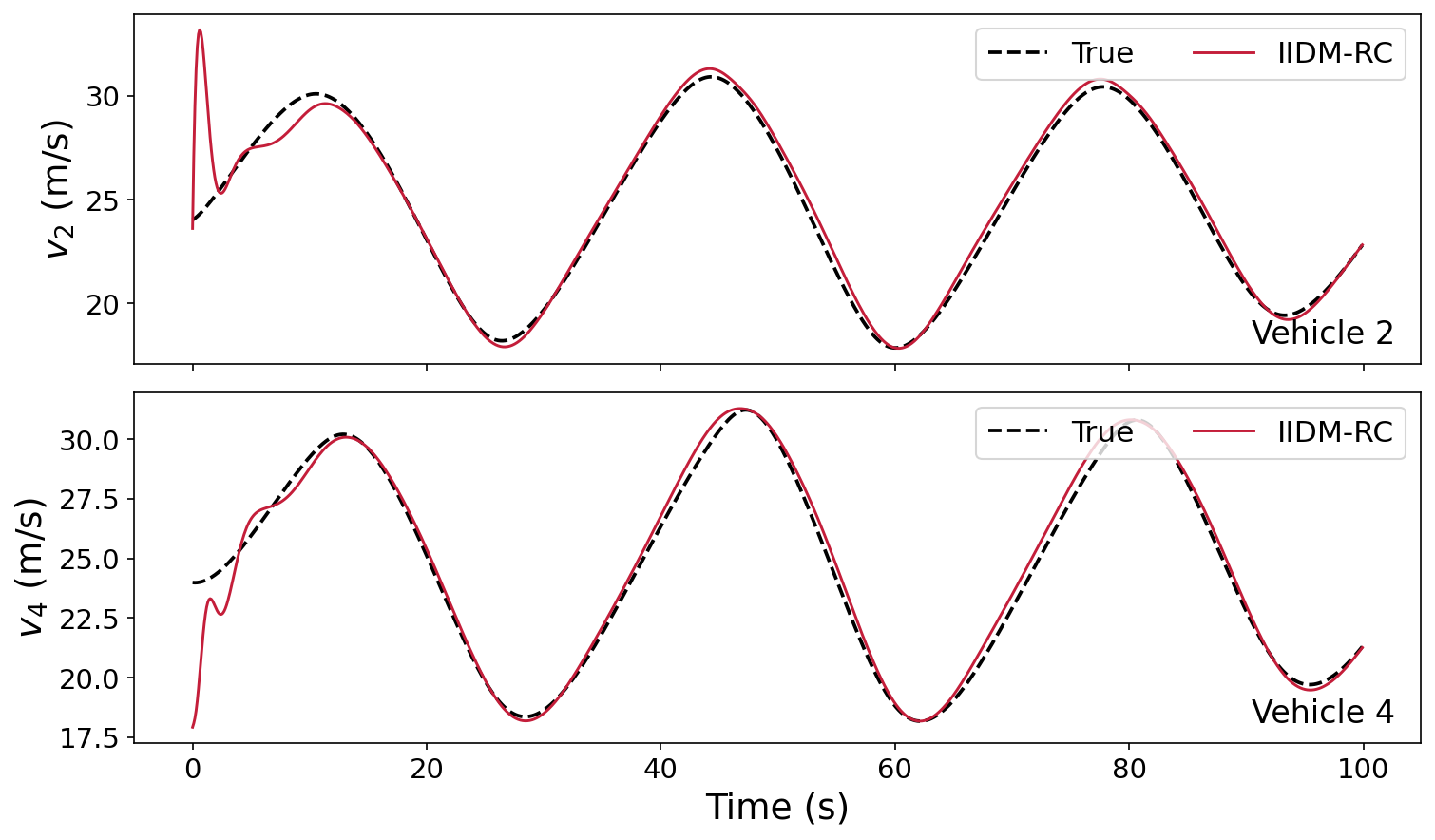}\\
  \includegraphics[width = 0.85\linewidth]{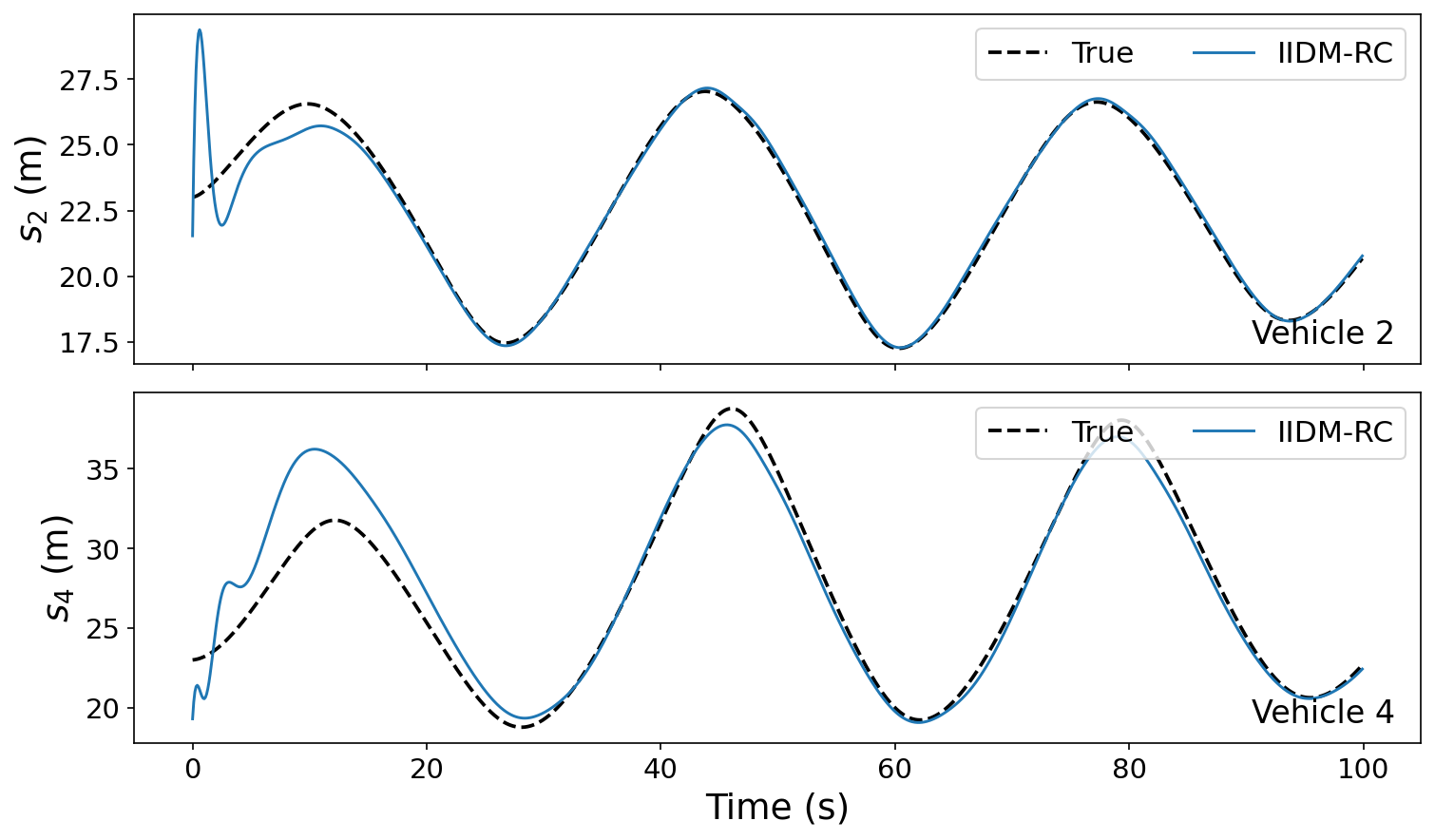}
  \caption{Velocity (top) and gap (bottom) predictions for the second and fourth vehicles in the $5$ vehicle IIDM target network with the third gap as the reservoir input. The brief initial transient arises from target initial conditions outside the trained attractor; once trajectories approach the attractor the prediction becomes accurate.}\label{fig:rc_predictions}
\end{figure}

In order to ensure that the IIDM-RC prediction is valid for traffic networks in general we check the performance prediction for multiple car-following models. In particular, we use the same target network defined by IIDM, IDM, and OVM dynamics, each with parameters set to exhibit similar behavior. Table~\ref{tab:nrmse_diff_targets} shows the prediction errors in the settled regime using IIDM-RC for each of these target networks. The prediction for all three models have a high level of prediction performance (given as mean per-vehicle NRMSE\footnote{Normalization is with respect to the standard deviation of the true signal. That is, for signal $y$ and prediction $\hat{y}$ $NRMSE = \frac{\sqrt{\frac{1}{T}\sum_{t=1}^T (\hat{y}_t - y_t)}}{\sigma_{y}}$, where $\sigma_y$ is the standard deviation.}), which illustrates that the IIDM-RC is able to accurately predict general traffic models, not only networks defined with IIDM dynamics.

\begin{table}[h]
\centering
\caption{Mean per-vehicle NRMSE  for prediction of both gaps and velocities in IIDM, OVM, and IDM target networks using IIDM-RC with reservoir input $s_3$. Values are for $t > 30$ seconds and are computed over $20$ seeds.}
\label{tab:nrmse_diff_targets}
\begin{tabular}{lcc}
\hline
Target & Gap NRMSE & Velocity NRMSE \\
\hline
IIDM & $0.0687 \pm 0.0055$ & $0.0654 \pm 0.0077$ \\
OVM  & $0.0372 \pm 0.0072$ & $0.0780 \pm 0.0033$ \\
IDM  & $0.0390 \pm 0.0086$ & $0.0614 \pm 0.0094$ \\
\hline
\end{tabular}
\end{table}
The gap and velocity predictions can then be used to predict the positions of the vehicles over time. If the velocity of the vehicles is predicted then the position can be calculated through integrating the differential equation $\dot{x} = v$ (or through a discrete approach). If the gaps are predicted, and we know the velocity of the sensed vehicle, we can compute the actual position of the sensed vehicle directly, before computing the other positions through the sum of the gaps and vehicle lengths\footnote{A video of the position prediction using the gaps as the predicted value can be seen at \url{https://ece.uwaterloo.ca/~sl2smith/position_IIDM-RC_prediction.mp4}}. We note that the prediction of positions through the gaps is safer than through the velocities. Since all gap predictions are positive, and the ground truth of one vehicle is known, the predicted positions will not collide. Conversely, if the velocity is used, errors can add up over time, leading to the incorrect prediction of a collision.

\subsection{Comparison with Other Methods}
In this section we compare the performance of the IIDM-RC with other machine learning methods for the gap prediction of the undersensed IIDM network. In particular, we will compare with the use of an LSTM network and an ESN. For all three methods we use an IIDM target network and optimize over the hyperparameters to maximize performance. We compare both the prediction error and training parameters for each method to identify benefits and drawbacks of each method.

For this comparison we predict an undersensed $5$ vehicle network with an IIDM-RC composed of $10$ vehicles, an ESN with $100$ nodes, and an LSTM with two hidden layers of $64$ nodes each. These sizes were picked over hyperparameter sweeps that illustrated further increasing the size did not result in increased prediction performance and details are given in Appendices~\ref{app:param_values} and~\ref{app:lstm_size_comparison}. In Table~\ref{tab:nrmse_method_comparison} we compare the performance of the three methods over a $100$ second prediction window for the prediction of the vehicle gaps with the third gap being provided as the input. As the third gap is known it is excluded from the provided prediction values and mean calculation. The table reports the settled window ($30$--$100$\,s); over the full horizon ($0$--$100$\,s) the mean per-vehicle NRMSE is $0.0729$ for IIDM-RC, $0.1111$ for ESN-100, and $0.1082$ for the LSTM. 

% ── Table 2: NRMSE comparison ─────────────────────────────────────────────────
% [2026-07-21 shrink: reduced to settled window and single column; full-horizon means moved to text]
\begin{table}[h]
\centering
\caption{Settled-window ($30$--$100$\,s) gap-prediction NRMSE for IIDM-RC, ESN-100, and LSTM $2\times64$ on the IIDM target with signal $v_0^{(1)}(t) = 4\cos(2t/15) + 6\sin(2t/12)+25$, averaged over 20 seeds. Bold indicates the lowest error in each row.}
\label{tab:nrmse_method_comparison}
\small\setlength{\tabcolsep}{4pt} % [2026-07-21 shrink: fit single column]
\begin{tabular}{lccc}
\hline
Metric & IIDM-RC & ESN & LSTM \\
\hline
Mean per-vehicle NRMSE & \textbf{0.0726} & 0.0997 & 0.0857 \\
% \addlinespace
Vehicle 1   & \textbf{0.0544} & 0.0751 & 0.0696 \\
Vehicle 2   & \textbf{0.0444} & 0.0624 & 0.0450 \\
Vehicle 4   & 0.1109 & 0.1282 & \textbf{0.1087} \\
Vehicle 5   & \textbf{0.0808} & 0.1333 & 0.1197 \\
\hline
\end{tabular}
\end{table}

Table~\ref{tab:nrmse_method_comparison} shows that for both the full horizon and settled windows the IIDM-RC achieves a similar level of accuracy to the other two methods, with the lowest error on every vehicle on all but vehicle $4$ in the settled window. The minor improvement of the IIDM-RC compared to the other methods can be understood based on the internal workings of each method. While the LSTM and ESN approaches are both successful in their predictions, the system being predicted appears as a generic time-series input, as there is no relation between the target system dynamics and the model. However, with the IIDM-RC, the physics of the target system matches with that of the reservoir, allowing for the encoding of similar trajectories in the reservoir and target network. 

While the IIDM-RC has a minor benefit in prediction accuracy, it is more interesting to consider the training time differences between the methods. Table~\ref{tab:training_comparison} provides a comparison of the training times and parameters between the IIDM-RC, ESN, and LSTM networks in the gap prediction task. From this table we see that the IIDM-RC is significantly faster than both the ESN and LSTM approaches, due to a combination of the training method and the fewer parameters. The fact that both the IIDM-RC and ESN approaches are trained with a single singular value decomposition (SVD) allows for near instantaneous training. Meanwhile the LSTM must go through 100 epochs of training, leading to a significant slowdown. This training time difference is particularly significant in the case of a changing target network. Should the prediction system need to be retrained if using the IIDM-RC predictor this takes less than one millisecond, compared to the $24$ second timeframe of the LSTM. This creates a long window in which one is able to receive predictions from an IIDM-RC but not an LSTM. Due to the similar levels of prediction accuracy, this training time mismatch is a significant benefit of using an IIDM-RC approach to predicting behavior in an undersensed traffic network.

% ── Table 3: Training cost comparison ─────────────────────────────────────────
\begin{table}[h] % [2026-07-21 shrink: was table*]
\centering
\caption{Comparison of training costs for IIDM-RC, ESN-100, and LSTM
$2\times64$. Times exclude target-system simulation and data collection;
IIDM-RC and ESN times are averaged over 100 runs, LSTM over 10 runs.}
\label{tab:training_comparison}
\footnotesize\setlength{\tabcolsep}{3.5pt} % [2026-07-21 shrink: fit single column]
\begin{tabular}{lccc}
\hline
Metric & IIDM-RC & ESN & LSTM \\
\hline
Training time    & $0.29 \pm 0.20$\,ms     & $8.3 \pm 0.4$\,ms          & $24.6 \pm 0.1 $\,s               \\
Training steps   & 3000           & 3000               & $3000\times100$  \\
Parameters       & 55             & 505                & 50{,}757                \\
Algorithm        & SVD            & SVD                & Adam                    \\
Slowdown         & $1\times$   & $\approx29\times$ & $\approx 85{,}000 \times$ \\
\hline
\end{tabular}
\end{table}

\section{Conclusion}
We considered the problem of using a traffic network within physical reservoir computing for the purpose of predicting an undersensed traffic network. We show that the traffic network satisfies the echo state property for a class of slowly-varying inputs when using the IIDM to simulate traffic dynamics. While the slowly-varying requirement is included within the mathematical analysis, numerical simulations suggest that the ESP holds for a wider class of signals, including those with accelerations within a physically meaningful range. The simulations compare the performance of IIDM-RC with more traditional ESNs and LSTMs, and show that IIDM-RC achieves a similar level of prediction accuracy, while having significantly lower training time and computational cost. As such, while IIDM-RC is not a general replacement for other methods, it is appealing when the conditions of the undersensed network change rapidly. Directions for future work include utilizing the prediction of the undersensed network state to control its behavior and increasing the complexity of the vehicular reservoir by considering networks with multiple roads and lanes.
% ----------------------------------------------------------------------
\appendix
% ----------------------------------------------------------------------

\counterwithin{theorem}{section}
\counterwithin{figure}{section}
\counterwithin{table}{section}

\section{Auxiliary Result}\label{app:math_results}

The result provided is used in proving that trajectories remain in the set $X$.
\begin{proposition}\longthmtitle{Invariance of a Lower-Bound}\label{prop:lower_bound_invariance}
  Let $\map{f}{\R}{\R}$ be continuous and differentiable. Suppose there exists $t_0$ such that if $t \geq t_0$ and $f(t) \leq c$, then $f'(t) > 0$. Then, if $f(t_0) \geq c$, $f(t) \geq c$ for all $t \geq t_0$. In addition, $f(t) > c$ for all $t > t_0$.
\end{proposition}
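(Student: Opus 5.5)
The plan is a proof by contradiction built around the first time $f$ could reach $c$, using only continuity and the sign condition on $f'$. I first prove the strict statement, $f(t) > c$ for all $t > t_0$. The weak statement, $f(t)\ge c$ for $t\ge t_0$, then follows at once, since $f(t_0)\ge c$ is given.

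\emph{Step 1 (leaving $t_0$).} I first show that $f(t) > c$ on some interval $(t_0, t_0+\epsilon)$. If $f(t_0) > c$, this follows from continuity. If $f(t_0) = c$, the hypothesis gives $f'(t_0) > 0$. Since $f'(t_0) = \lim_{h\downarrow 0} (f(t_0+h)-c)/h$, we get $f(t_0+h) > c$ for all small $h>0$.

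\emph{Step 2 (no return to the level $c$).} Suppose there is some $t_1 > t_0$ with $f(t_1) \le c$. Let $\tau = \inf\{t > t_0 : f(t) \le c\}$. By Step 1, $\tau \ge t_0+\epsilon > t_0$. By continuity, $f(\tau) \le c$, and $f(t) > c$ on $(t_0,\tau)$, which forces $f(\tau) = c$. Because $\tau \ge t_0$ and $f(\tau) \le c$, the hypothesis gives $f'(\tau) > 0$. Now compute $f'(\tau)$ as a left limit:
\begin{align*}
f'(\tau) = \lim_{h\downarrow 0}\frac{f(\tau)-f(\tau-h)}{h} = \lim_{h\downarrow 0}\frac{c - f(\tau-h)}{h}.
\end{align*}
For small $h>0$ we have $\tau - h \in (t_0,\tau)$, so $f(\tau-h) > c$ and each difference quotient is negative. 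Hence $f'(\tau) \le 0$, which contradicts $f'(\tau) > 0$. Therefore $f(t) > c$ for all $t > t_0$.

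The only delicate step is the case $f(t_0) = c$. Here Step 1 must use the strict positivity of $f'(t_0)$ so that the infimum $\tau$ is strictly greater than $t_0$. Without this, the left-limit argument at $\tau$ would have no interval $(t_0,\tau)$ on which $f > c$. With that handled, the rest uses only that $f$ is differentiable at a single point, so no mean value theorem or integrability assumptions on $f'$ are needed.
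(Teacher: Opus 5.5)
Your proof is correct, and it takes a different route from the paper's.

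The paper proves the two claims in the opposite order. For the weak bound, it assumes $f(t)<c$ for some $t>t_0$. It takes the last time $t^*\in[t_0,t]$ with $f(t^*)\ge c$ and applies the mean value theorem on $(t^*,t)$. This produces a point $t'>t_0$ with $f(t')<c$ and $f'(t')<0$, which contradicts the hypothesis. Only afterwards does it obtain strictness. Given the weak bound, any $t>t_0$ with $f(t)=c$ is an interior minimum of $f$ on $[t_0,\infty)$, so Fermat's theorem gives $f'(t)=0$, contradicting $f'(t)>0$.

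Your first-hitting-time argument gets the strict inequality in one pass. It uses only the definition of the derivative: a right difference quotient at $t_0$ and a left difference quotient at $\tau$. The weak bound then comes for free.

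Each route has its cost and benefit:
\begin{itemize}
\item Your version needs Step 1 to guarantee $\tau>t_0$ when $f(t_0)=c$. You correctly flag this as the delicate point.
\item The paper never meets that issue. Its mean-value point lies strictly after $t^*\ge t_0$, and its Fermat step is applied only at interior points $t>t_0$, where the already-proved weak bound supplies the minimum.
\item The paper's split leans on standard named theorems. Yours is more self-contained and local.
\end{itemize}

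Your closing remark is accurate, though the paper's use of the mean value theorem also needs nothing beyond the differentiability already assumed.
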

\begin{pf}
  By contradiction suppose that $f(t) < c$ for some $t > t_0$. Then, by continuity and compactness there exists a final time $t^* \in [t_0,t]$ such that $f(t^*) \geq c$, after which $f < c$ on $(t^*,t]$. By the mean value theorem this gives $t' \in (t^*,t)$ with $f'(t') = (f(t) - f(t^*))/(t - t^*)$. Since $f(t) \leq c \leq f(t^*)$ this derivative is negative. But $f(t') < c$ and $t' > t_0$, which contradicts the hypothesis that gives $f'(t') > 0$.

  To see the second part of the claim we note that if $f(t) \geq c$ for all $t > t_0$, it is a local minimum on $[t_0,\infty)$. Suppose at time $t$ the trajectory reaches $f(t) = c$. As a local minimum this means that $f'(t) = 0$, which contradicts the fact that $f'(t) > 0$ if $f(t) \leq c$. Therefore, for all $t > t_0$, $f(t) > c$.
\end{pf}  

\section{Feasibility of the Contraction Metric}\label{app:metric_feasibility}

Here we show that Assumption~\ref{assumption:contraction_metric} is feasible for a narrow input range. From there the result can be certified on the a given range through testing of the convex hull.
\begin{proposition}\label{prop:narrow_feasibility}
    Fix $\bar{u}_0 \in [u_{\min},u_{\max}]$ and write $K_0 = K(\bar{x}(\bar{u}_0),\bar{u}_0)$, $D_0 = D(\bar{x}(\bar{u}_0),\bar{u}_0)$. Choose
    \begin{align*}
        c = \frac{D_0}{2}, \qquad \alpha = K_0 + cD_0, \qquad P = \begin{bmatrix} 1 & -c \\ -c & \alpha\end{bmatrix} \succ 0.
    \end{align*}
    Then, $S(\bar{x}(\bar{u}_0),\bar{u}_0) = \diag(-D_0,-D_0K_0) \prec 0$. As such, there exists $\lambda_0 > 0$ such that $S(\bar{x}(\bar{u}_0),\bar{u}_0) \preceq -2\lambda_0 P$. Further, by continuity of the map $(x,u) \to S(x,u)$ on $X \times [u_{\min},u_{\max}]$ there exists $\bar{c} > 0$ and a neighborhood $I_0$ containing $\bar{u}_0$ such that Assumption~\ref{assumption:contraction_metric} holds with $\lambda = \lambda_0/2$ when $[u_{\min},u_{\max}] \subseteq I_0$.
\end{proposition}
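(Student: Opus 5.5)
The plan is a direct computation of $S$ at the frozen equilibrium, followed by a perturbation argument. At $(x,u) = (\bar{x}(\bar{u}_0),\bar{u}_0)$ the Jacobian of~\eqref{eq:IIDM_follower} in the state $x = (v,s)$ is
\begin{align*}
J_0 = \begin{bmatrix} -D_0 & K_0 \\ -1 & 0 \end{bmatrix}.
\end{align*}
Multiplying out gives
\begin{align*}
PJ_0 = \begin{bmatrix} -D_0 + c & K_0 \\ cD_0 - \alpha & -cK_0 \end{bmatrix}.
\end{align*}
Hence $S_0 = PJ_0 + J_0^\top P$ has off-diagonal entry $K_0 + cD_0 - \alpha$, which vanishes by the choice $\alpha = K_0 + cD_0$. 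Its diagonal entries are $2(c - D_0) = -D_0$ and $-2cK_0 = -D_0K_0$, using $c = D_0/2$. This yields the claimed $S_0 = \diag(-D_0,-D_0K_0)$.

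To make this meaningful I need $D_0 > 0$ and $K_0 > 0$, which I take from earlier results.
\begin{itemize}
\item \textbf{$D_0 > 0$.} This is exactly the strict inequality established at the end of the proof of Lemma~\ref{lemma:damping}, at the state $(u,s,u)$. There $\partial s^*/\partial v = T + u/(2\sqrt{ab}) > 0$ forces $\partial A/\partial v < 0$.
\item \textbf{$K_0 > 0$.} This follows because $\bar{x}(\bar{u}_0)$ lies in the active-braking unsaturated region (the region defining $\epsilon_0$, where $z<1$). There $A = \afree(v)(1 - z^{2a/\afree(v)})$ is strictly decreasing in $z$, and $z = s^*/s$ is strictly decreasing in $s$. (Since the equilibrium lies in that region, $K$ is positive and continuous there; this positivity is already used in Lemma~\ref{lemma:uniform_entry}.)
\item \textbf{$P \succ 0$.} The leading entry is $1>0$, and $\det P = \alpha - c^2 = K_0 + D_0^2/2 - D_0^2/4 = K_0 + D_0^2/4 > 0$.
\end{itemize}

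With $\mu = \min\{D_0, D_0K_0\} > 0$ we have $S_0 \preceq -\mu I \preceq -\frac{\mu}{\lambda_{\max}(P)}P$. So $\lambda_0 = \mu/(2\lambda_{\max}(P))$ gives $S_0 \preceq -2\lambda_0 P$.

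For the neighborhood statement I use continuity of $(x,u) \mapsto S(x,u) + \lambda_0 P$. Its value at the base point is $\preceq -\lambda_0 P \prec 0$, so its maximal eigenvalue is negative there. The maximal eigenvalue is continuous, so there is an open neighborhood $\mathcal{N}$ of $(\bar{x}(\bar{u}_0),\bar{u}_0)$ on which $S \preceq -\lambda_0 P = -2(\lambda_0/2)P$. The point of this step is that $S$ is continuous (indeed $A$ is $C^1$) near the base point. This holds because, as noted, the equilibrium sits strictly inside the region where $v<v_0$, $z<1$ and the max in~\eqref{eq:desired_gap} is unclamped. It is also strictly inside $X$, since $\bar{u}_0 < v_0$, $\bar{s}(\bar{u}_0) = s_0 + \bar{u}_0T > s_0$, and $\bar{s}(\bar{u}_0) < s_{\max}$.

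Then I choose $\bar{c}>0$ and an interval $I_0 \ni \bar{u}_0$ small enough that $\{(x,u) : u \in I_0,\ x \in \Omega_{\bar{c}}(u)\} \subseteq \mathcal{N}$. This is possible because $\bar{x}(\cdot)$ is affine (hence continuous) and $\Omega_{\bar{c}}(u)$ is contained in the Euclidean ball of radius $\sqrt{\bar{c}/\lambda_{\min}(P)}$ about $\bar{x}(u)$. Shrinking further if necessary gives $\Omega_{\bar{c}}(u) \subseteq X$ for all $u \in I_0$. Then whenever $[u_{\min},u_{\max}] \subseteq I_0$, we have $\mathcal{W}_{\bar{c}} \subseteq \mathcal{N}$, and Assumption~\ref{assumption:contraction_metric} holds with $\lambda = \lambda_0/2$.

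The main obstacle is justifying regularity of $S$ near the base point, i.e.\ that the equilibrium is bounded away from the IIDM branch switches (at $z=1$ and $v=v_0$) and from the clamp in $s^*$. At equilibrium $z = (s_0+uT)/(s_0+uT) = 1$, so I would need to check carefully which branch is active there. If it lies on the $z=1$ boundary, I would argue that both adjacent branches give matching first derivatives at $z=1$. One would then use one-sided continuity, or a $C^1$ argument on each piece, to get the eigenvalue bound uniformly on a neighborhood.
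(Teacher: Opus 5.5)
Your proposal is correct and follows the paper's route. You compute $S$ at the frozen equilibrium and extract $\lambda_0$; your $\min\{D_0,D_0K_0\}/(2\lambda_{\max}(P))$ is just a more conservative choice than the paper's generalized-eigenvalue one. You then use openness of $\{S+\lambda_0P\prec 0\}$ and shrink $\bar{c}$ and $I_0$. You are in fact more careful than the paper: you secure $\Omega_{\bar{c}}(u)\subseteq X$, and you question the regularity of $S$, which the paper simply asserts.

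The concern you raise is real. At $\bar{x}(\bar{u}_0)$ we have $z=1$ exactly, so the equilibrium is not in the $z<1$ region, contrary to your $K_0$ bullet and to the paper's own definition of $\epsilon_0$. Your proposed fix does work, though. Because $f(1)=0$ in~\eqref{eq:acceleration_derivative}, every first partial of either branch reduces on $\{z=1\}$ to $-2a$ times the corresponding partial of $z$. Hence $A$ is $C^1$ near the equilibrium, with $K_0 = 2a/\bar{s}(\bar{u}_0)>0$ and $D_0 = \frac{2a}{\bar{s}(\bar{u}_0)}\bigl(T+\frac{\bar{u}_0}{2\sqrt{ab}}\bigr)>0$.
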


\begin{pf}
    We compute $\det P = \alpha - c^2 = K_0 + c(D_0-c) > 0$, so $P \succ 0$. Then, computing $PJ$, we get
    \begin{align*}
        PJ = \begin{bmatrix}
        1 & -c \\ -c & \alpha
        \end{bmatrix}\begin{bmatrix}
            -D & K \\ -1 & 0
        \end{bmatrix} = \begin{bmatrix}
            -D+c & K \\
            cD - \alpha & -cK
        \end{bmatrix}.
    \end{align*}
    At the point $(\bar{x}(\bar{u}_0),\bar{u}_0)$,  the off-diagonal entry of $S$ is equal to $K_0 + cD_0 - \alpha = 0$ and $S = \diag(-2(D_0-c),-2cK_0) = \diag(-D_0,-D_0K_0) \prec 0$ since both $D_0$ and $K_0$ are positive. Since $S \prec 0$ and $P \succ 0$ at the equilibrium there exists $\lambda_0$ satisfying $S(\bar{x}(\bar{u}_0),\bar{u}_0) \preceq -2\lambda_0 P$. Explicitly, we get $2\lambda_0P + S \preceq 0$, which holds when the eigenvalues of $P^{-1}(-S/2)$ are greater than $\lambda_0$. Since $-S/2 = \diag(D_0/2,D_0K_0/2) \succ 0$ and $P \succ 0$, this is feasible with
    \begin{align*}
        \lambda_0 = \frac{1}{2} \lambda_{\min}(P^{-1}(-S/2)),
    \end{align*}
    or the largest $\lambda_0$ such that $\diag(D_0,D_0K_0) - 2\lambda_0P \succeq 0$. From this value we have that $S(\bar{x}(\bar{u}_0),\bar{u}_0) + \lambda_0 P = (S(\bar{x}(\bar{u}_0),\bar{u}_0) + 2\lambda_0P) - \lambda_0 P \preceq 0 -\lambda_0P = -\lambda_0 P \prec 0$. The map $(x,u) \mapsto S(x,u) + \lambda_0 P$ is continuous and the set
    \begin{align*}
        \mathcal{O} = \{(x,u) ~|~ S(x,u) + \lambda_0 P \prec 0\}
    \end{align*}
    is open on $X \times [u_{\min},u_{\max}]$ and contains $\bar{x}(\bar{u}_0),\bar{u}_0)$. Since $\mathcal{O}$ is open and contains $(\bar{x}(\bar{u}_0),\bar{u}_0)$ there exists $\bar{c} > 0$ and neighborhood $I_0$ such that $\mathcal{W}_{\bar{c}} \subset \mathcal{O}$ whenever $[u_{\min},u_{\max}] \subseteq I_0$ since $W_{\bar{c}} \to \{(\bar{x}(\bar{u}_0),\bar{u}_0)\}$ as $\bar{c} \to 0$ and $[u_{\min},u_{\max}] \to \{\bar{u}_0\}$.
\end{pf}

This shows that there is a feasible metric on any small input range. To verify the existence of a feasible metric for a larger input range we take a numerical approach. We note that~\eqref{eq:certificate} is linear in $J$ and affine in $(K,D)$. It follows that the certificate holds if and only if it holds at the extreme points of the convex hull of pairs $(K,D)$ attained on $\mathcal{W}_c$. 

\section{Convergence of Lead Vehicle Velocities}\label{app:lead_vehicle_convergence}
Here we illustrate that two copies of a vehicle platoon governed by IIDM dynamics with different initial conditions that the lead vehicle velocity will converge under the same desired velocity. This allows for the use of the desired velocity of the lead vehicle in the platoon as the reservoir input.

We recall that for a single IIDM lead vehicle we treat it as if there is no additional leader, so the dynamics are governed solely by the free acceleration function. That is,
\begin{align*}
    \dot{v}(v_0) = \afree(v,v_0) = \begin{cases}
            a\!\left[1 - \left(\dfrac{v}{v_0}\right)^{\!\delta}\right] & v \leq v_0,\\[8pt]
    -b\!\left[1 - \left(\dfrac{v_0}{v}\right)^{\!a\delta/b}\right] & v > v_0.
    \end{cases}
\end{align*}

To show that the lead vehicle velocities converge we use the error dynamics and take input $u = v_0$. Let $v_1$ and $v_2$ be the velocities of two copies of the platoon with different initial conditions. The error is given by $e = v_1 - v_2$, and the error dynamics are $\dot{e} = \dot{v}_1 - \dot{v}_2$. Convergence of the trajectories then corresponds to decay of the error dynamics to zero, which we illustrate below.

\begin{lemma}\longthmtitle{Exponential Decay of Lead Vehicle Error}\label{lemma:e1_decay}
  Suppose that $v(0) \in [u_{\min},u_{\max}]$. Then, the error $e$ satisfies
  \begin{align}\label{eq:e1_decay_equation}
    |e(t)| \leq |e(0)|e^{-\kappa_1 t},
  \end{align}
  where $\kappa_1 = \min_{v\in [u_{\min},u_{\max}]} (-\frac{\partial \afree(v,u)}{\partial v})$.
\end{lemma}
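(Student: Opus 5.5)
The argument is a one-dimensional monotonicity estimate on the error. Both copies of the lead vehicle are driven by the same $v_0(t)=u(t)$ and obey $\dot v_i = \afree(v_i,u)$, so the error satisfies $\dot e = \afree(v_1,u)-\afree(v_2,u)$. By the mean value theorem (or the fundamental theorem of calculus along the segment $v_\theta=\theta v_1+(1-\theta)v_2$),
\begin{align*}
  \dot e = \left(\int_0^1 \frac{\partial \afree}{\partial v}(v_\theta,u)\,d\theta\right) e .
\end{align*}
It then suffices to show that the bracketed factor is at most $-\kappa_1$, which follows if every $v_\theta$ lies in $[u_{\min},u_{\max}]$. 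Then $\frac{d}{dt}\tfrac12 e^2 \le -\kappa_1 e^2$, and Gr\"onwall's inequality gives $|e(t)|\le |e(0)|e^{-\kappa_1 t}$.

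I would carry this out in three steps.

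\textbf{Step 1: regularity of $\afree$.} $\afree$ is continuous at $v=v_0$ (both branches vanish there) and $C^1$ on each branch. A direct computation gives $\partial_v\afree = -a\delta v^{\delta-1}/v_0^\delta$ for $v\le v_0$ and $\partial_v\afree = -a\delta v_0^{a\delta/b}v^{-a\delta/b-1}$ for $v>v_0$. These agree at $v=v_0$ with value $-a\delta/v_0$, so $\afree$ is $C^1$ in $v$ and strictly decreasing for $v>0$. Hence $\kappa_1>0$ provided $u_{\min}>0$, which holds because $u_{\min}>2T\sqrt{ab}$. The minimum defining $\kappa_1$ should be read jointly over $v$ and over the admissible inputs $u\in[u_{\min},u_{\max}]$, and it is attained by compactness and continuity.

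\textbf{Step 2: invariance of $[u_{\min},u_{\max}]$.} If $v=u_{\max}\ge u$, then $\afree\le 0$; if $v=u_{\min}\le u$, then $\afree\ge 0$. So the interval is forward invariant for each copy, and the argument of Proposition~\ref{prop:lower_bound_invariance} applies to each endpoint (with non-strict inequality, or by a comparison argument). Convexity of the interval then keeps every $v_\theta$ inside it.

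\textbf{Step 3: the error estimate.} With $\dot e\,e\le -\kappa_1 e^2$, the comparison lemma applied to $|e|$ (using a Dini derivative where $e=0$, as in Lemma~\ref{lemma:omega_invariance}) yields the bound in~\eqref{eq:e1_decay_equation}.

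The main obstacle I expect is Step 2 together with the uniformity of $\kappa_1$ in Step 1. If $\kappa_1$ is literally defined with a fixed $u$, it must be bounded below uniformly as $u(t)$ varies. Using the explicit derivative formulas, $|\partial_v\afree|\ge a\delta\min\bigl(u_{\min}^{\delta-1}/u_{\max}^{\delta},\ u_{\max}^{-1}(u_{\min}/u_{\max})^{a\delta/b}\bigr)>0$ gives such a bound. The invariance step needs the input itself to stay in $[u_{\min},u_{\max}]$, which I take from $u\in\mathcal{U}_\rho$.
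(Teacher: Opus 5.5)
Your proposal is correct and follows the paper's proof: the paper also computes $\partial_v \afree$ on both branches to get $\kappa_1>0$, applies the mean value theorem to obtain $e\dot e\le -\kappa_1 e^2$, and concludes with Gr\"onwall's inequality. Your Step~2 (forward invariance of $[u_{\min},u_{\max}]$ when the input takes values there) and your uniform lower bound on $\kappa_1$ for time-varying $u$ make explicit two points that the paper's proof uses without stating.
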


\begin{pf}
  Computing the partial derivative of $\afree(v,u)$ with respect to $v$ in each regime gives $-\frac{a\delta}{u^\delta}v^{\delta - 1}$ when $v \leq u$ and $-a\delta v_0^{a\delta/b}v^{-a\delta/b - 1}$ when $v > u$. From these expressions we see that $\kappa_1 > 0$. As such, $\afree(v,u)$ is strictly decreasing in $v$ on $[u_{\min},u_{\max}]$, and applying the mean-value theorem gives $e\dot{e} \leq -\kappa_1 e^2$. A direct application of Gr\"onwall's Inequality yields the result. 
\end{pf}

Therefore, since the error decays exponentially, we have that the velocities of the lead vehicles for any two copies of a platoon with the desired velocity as the input converge. This is equivalent to the ESP for the lead vehicle, and when combined with the argument made in Proposition~\eqref{prop:platoon_ESP} allows for use of the desired velocity as the input for the IIDM-RC while retaining the ESP for the $N$-vehicle platoon.
\section{Simulation Parameters}\label{app:param_values}
In this section we provide the network and reservoir parameters for the simulations of Section~\ref{sec:simulations}.  In Tables~\ref{tab:iidm_params}-\ref{tab:ovm_params} we provide the parameters for the IIDM, IDM, and OVM target networks. In these tables $\mathcal{U}[x,y]$ represents a uniform random distribution on the interval $[x,y]$.

% ── IIDM target ──────────────────────────────────────────────────────────────
\begin{table}[h]
\centering
\caption{IIDM target system parameters.}
\label{tab:iidm_params}
\begin{tabular}{lcc}
\hline
Parameter & Value & Units \\
\hline
$N$        & 5                             & --      \\
$L$        & 400                           & m       \\
Vehicle Length     & 1                             & m       \\
$s_0$      & 4.0                           & m       \\
$v_0$      & 40.0                          & m/s     \\
$\delta$ & 4 & -- \\
$a$        & $\mathcal{U}[1.0,\,1.8]$     & m/s$^2$ \\
$b$        & $\mathcal{U}[1.0,\,1.8]$     & m/s$^2$ \\
$T$        & $\mathcal{U}[0.5,\,1.25]$    & s       \\
\hline
\end{tabular}

\end{table}

\begin{table}[h]
\centering
\caption{IDM target system parameters.}
\label{tab:idm_params}
\begin{tabular}{lcc}
\hline
Parameter & Value & Units \\
\hline
$N$        & 5                               & --       \\
$L$        & 400                             & m        \\
Vehicle Length     & 1                               & m        \\
$s_0$      & 4.0                             & m        \\
$v_0$      & 40.0                            & m/s      \\
$\delta$   & 4                               & --       \\
$a$        & $\mathcal{U}[1.15,\,1.65]$     & m/s$^2$  \\
$b$        & $\mathcal{U}[1.15,\,1.65]$     & m/s$^2$  \\
$T$        & $\mathcal{U}[0.875,\,1.075]$   & s        \\
\hline
\end{tabular}

\end{table}

% ── OVM target ────────────────────────────────────────────────────────────────
\begin{table}[h]
\centering
\caption{OVM target system parameters.}
\label{tab:ovm_params}
\begin{tabular}{lcc}
\hline
Parameter & Value & Units \\
\hline
$N$        & 5                               & --        \\
$L$        & 400                             & m         \\
Vehicle Length     & 1                               & m         \\
$s_0$      & 4.0                             & m         \\
$v_0$      & 40.0                            & m/s       \\
$\tau$     & $\mathcal{U}[1.0,\,1.1]$       & s         \\
$h_c$      & $\mathcal{U}[13.5,\,16.5]$     & m         \\
$\kappa$   & $\mathcal{U}[0.35,\,0.45]$     & m$^{-1}$  \\
\hline
\end{tabular}

\end{table}

The parameters used for the ESN and LSTM networks are given in Tables~\ref{tab:esn_params} and~\ref{tab:lstm_params}, respectively. The chosen ESN parameters are based on a sweep of both network size and spectral radius, with the smallest effective network size being taken. The LSTM parameters were swept over both layer size and number of hidden layers. Appendix~\ref{app:lstm_size_comparison} discusses the results over these parameter sweeps.

% ── IIDM reservoir ──────────────────────────────────────────────────────────────
\begin{table}
\caption{IIDM-RC reservoir parameters.}
\label{tab:reservoir_params}
\centering
\begin{tabular}{lcc}
\toprule
Parameter & Value & Units \\
\midrule
$N_r$   & 10                      & --      \\
$L$     & 250                     & m       \\
$s_0$   & 2.0                     & m         \\
$v_0$   & 40.0                    & m/s     \\
$\delta$ & 4 & -- \\
$a$     & $\mathcal{U}[1.0, 1.8]$ & m/s$^2$ \\
$b$     & $\mathcal{U}[1.0, 1.8]$ & m/s$^2$ \\
$T$     & $\mathcal{U}[0.5, 1.25]$ & s      \\
$\beta$ & $10^{-5}$               & --      \\
\bottomrule
\end{tabular}
\end{table}

\begin{table}[h]
\centering
\caption{ESN hyperparameters.}
\label{tab:esn_params}
\begin{tabular}{ll}
\toprule
Parameter & Value \\
\midrule
Reservoir size $N$               & 100 \\
Spectral radius $\rho$           & 0.7 \\
Connection density               & 0.10 \\
Input scaling                    & 0.1 \\
Leaking rate $\alpha$            & 1.0 \\
Bias                             & 0.1 \\
Tikhonov $\beta$                 & 0.01 \\
Input signal                     & $s_3$ (normalised) \\
\bottomrule
\end{tabular}
\end{table}

\begin{table}[h]
\centering
\caption{LSTM hyperparameters.}
\label{tab:lstm_params}
\begin{tabular}{ll}
\toprule
Parameter & Value \\
\midrule
Layers                           & 2 \\
Hidden units per layer           & 64 \\
Input sequence length            & 20 steps (2\,s) \\
Training epochs                  & 100 \\
Optimiser                        & Adam \\
Learning rate                    & 0.001 \\
Weight decay                     & 0.001 \\
Batch size                       & 64 \\
Dropout (between layers)         & 0.1 \\
Input signal                     & $s_3$ (sliding window) \\
\bottomrule
\end{tabular}
\end{table}

\section{LSTM Size Comparison}\label{app:lstm_size_comparison}
Here we provide plots illustrating that performance of the LSTM is mostly constant across different sizes. To see this we tested the prediction of the undersensed IIDM traffic using LSTM networks with varying numbers of layers and hidden layer sizes. We consider between $2$ and $4$ hidden layers and between $64$ and $1024$ units in each layer. In Figure~\ref{fig:lstm_layer_comparison} we provide comparisons between the different sized LSTM networks and the IIDM-RC prediction. We see that for all four vehicles the NRMSE is relatively constant across the LSTM networks and is greater than the IIDM-RC prediction. What this shows is that the prediction error is not a function of not having a large enough network but instead is due to being unable to fully encode the target physics. The similarity between the physics of the IIDM-RC reservoir and the target network then explains the better performance of the IIDM-RC.

\begin{figure*}
\centering
    \includegraphics[width=0.9\linewidth]{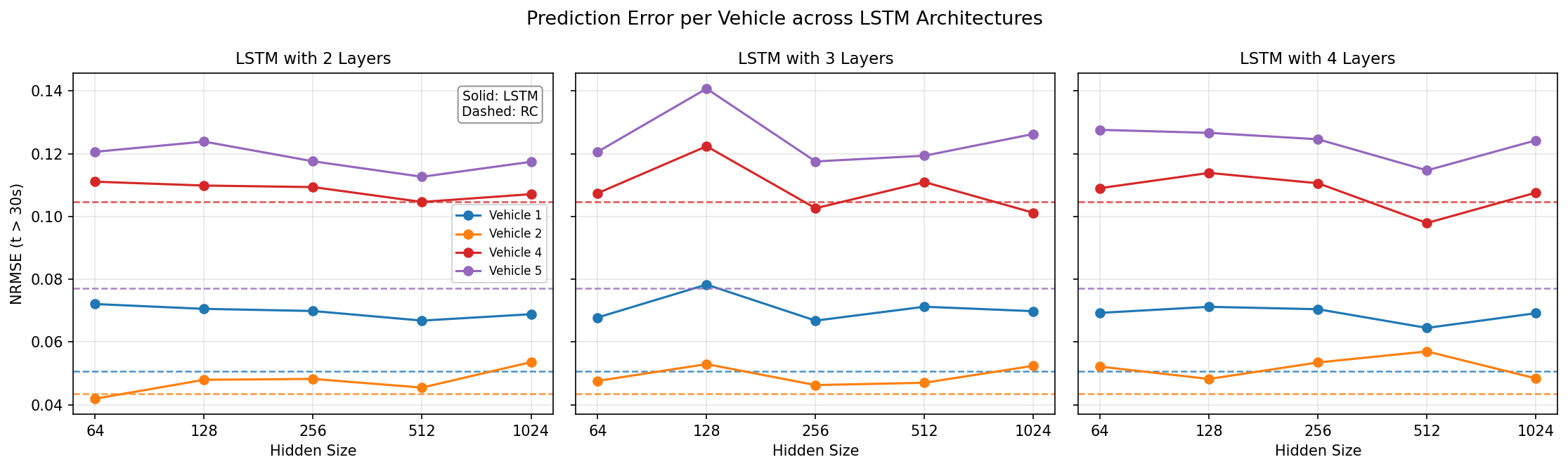}
    \caption{Comparison of the prediction NRMSE in the settled time domain for the undersensed $5$ vehicle network. Networks with sizes ranging between $2\times64$ and $4\times1024$ are shown on a per-vehicle basis with the input vehicle (vehicle 3) omitted. The dashed lines provide comparison values from using IIDM-RC for the same prediction. We see in general changing the size of the LSTM has minimal impact on the prediction error and the IIDM-RC reliably outperforms the LSTM.}\label{fig:lstm_layer_comparison}
\end{figure*}
% ----------------------------------------------------------------------

\end{document}